\documentclass[submission,copyright,creativecommons]{eptcs}

\pdfoutput=1

\usepackage{amsmath} 
\usepackage{amsthm} 
\usepackage{amssymb} 
\usepackage{bold-extra} 
\usepackage{hyperref} 
\usepackage{stmaryrd} 


\newenvironment{mitem}
{\begin{itemize}
  \setlength{\itemsep}{1pt}
  \setlength{\parskip}{0pt}
  \setlength{\parsep}{0pt}}
{\end{itemize}}

\newtheorem{thm}{Theorem}
\newtheorem*{thm*}{Theorem}
\newtheorem{prop}[thm]{Proposition}
\newtheorem*{prop*}{Proposition}
\newtheorem{lem}[thm]{Lemma}
\newtheorem{cor}[thm]{Corollary}
\theoremstyle{definition}
\newtheorem{dfn}{Definition}
\newtheorem*{ex}{Example}

\newcommand{\ZX}{\textsc{zx}}
\newcommand{\NOT}{\textsc{not}}

\newcommand{\abs}[1]{\left| #1 \right|}

\newcommand{\ZZ}{\mathbb{Z}}

\newcommand{\ket}[1]{\left| #1 \right>} 
\newcommand{\bra}[1]{\left< #1 \right|} 

\renewcommand{\t}[1]{\ensuremath{^{\otimes #1}}}

\usepackage[svgnames]{xcolor} 

\usepackage{tikz}
\usetikzlibrary{shapes.geometric} 
\usetikzlibrary{shapes.misc} 
\usetikzlibrary{shapes.symbols} 
\usetikzlibrary{decorations.pathreplacing} 


\pgfdeclarelayer{edgelayer}
\pgfdeclarelayer{nodelayer}
\pgfsetlayers{edgelayer,nodelayer,main}

\tikzstyle{none}=[inner sep=0pt]
\tikzstyle{rn}=[circle,fill=Red,draw=Black,line width=0.8 pt,minimum size=5pt,inner sep=0pt]
\tikzstyle{gn}=[circle,fill=Lime,draw=Black,line width=0.8 pt,minimum size=5pt,inner sep=0pt]
\tikzstyle{Hadamard}=[rectangle,fill=Yellow,draw=Black,minimum size=8pt,inner sep=0pt,label={center:$\scriptstyle\mathrm{H}$}]
\tikzstyle{gphase}=[rounded rectangle,rounded rectangle arc length=90,fill=Lime!20,inner sep=2pt]
\tikzstyle{rphase}=[rounded rectangle,rounded rectangle arc length=90,fill=Red!20,inner sep=2pt]
\tikzstyle{normalrect}=[rectangle,fill=white,draw=black,minimum height=10pt,minimum width=12pt,inner sep=0pt]

\tikzstyle{bn}=[circle,fill=black,draw=black,inner sep=1pt]
\tikzstyle{bigcloud}=[cloud,fill=white,draw=black]
\tikzstyle{bstar}=[star,fill=Black,draw=Black,minimum size=8pt,inner sep=0pt]


\newcommand{\effect}[1]{\begin{tikzpicture}
	\begin{pgfonlayer}{nodelayer}
		\node [style=none] (0) at (0, -0.25) {};
		\node [style={#1}] (1) at (0, 0.25) {};
	\end{pgfonlayer}
	\begin{pgfonlayer}{edgelayer}
		\draw [] (0.center) to (1);
	\end{pgfonlayer}
\end{tikzpicture}}

\newcommand{\state}[1]{\begin{tikzpicture}
	\begin{pgfonlayer}{nodelayer}
		\node [style={#1}] (0) at (0, 0) {};
		\node [style=none] (1) at (0, 0.5) {};
	\end{pgfonlayer}
	\begin{pgfonlayer}{edgelayer}
		\draw (0) to (1.center);
	\end{pgfonlayer}
\end{tikzpicture}}

\newcommand{\phase}[1]{\begin{tikzpicture}
	\begin{pgfonlayer}{nodelayer}
		\node [style=none] (0) at (0, 0.4) {};
		\node [style={#1}] (1) at (0, -0) {};
		\node [style=none] (2) at (0, -0.4) {};
	\end{pgfonlayer}
	\begin{pgfonlayer}{edgelayer}
		\draw [] (0.center) to (1);
		\draw [] (1) to (2.center);
	\end{pgfonlayer}
\end{tikzpicture}}

\newcommand{\splitnode}[0]{\begin{tikzpicture}
	\begin{pgfonlayer}{nodelayer}
		\node [style=gn] (0) at (0, -0) {};
		\node [style=none] (1) at (0, -0.5) {};
		\node [style=none] (2) at (-0.5, 0.5) {};
		\node [style=none] (3) at (0.5, 0.5) {};
	\end{pgfonlayer}
	\begin{pgfonlayer}{edgelayer}
		\draw [bend right=15] (2.center) to (0);
		\draw [bend left=15] (3.center) to (0);
		\draw (0) to (1.center);
	\end{pgfonlayer}
\end{tikzpicture}}

\newcommand{\joinnode}[0]{\begin{tikzpicture}
	\begin{pgfonlayer}{nodelayer}
		\node [style=gn] (0) at (0, -0) {};
		\node [style=none] (1) at (0, 0.5) {};
		\node [style=none] (2) at (0.5, -0.5) {};
		\node [style=none] (3) at (-0.5, -0.5) {};
	\end{pgfonlayer}
	\begin{pgfonlayer}{edgelayer}
		\draw [bend right=15] (2.center) to (0);
		\draw [bend left=15] (3.center) to (0);
		\draw (0) to (1.center);
	\end{pgfonlayer}
\end{tikzpicture}}

\newcommand{\Hadamard}[0]{\begin{tikzpicture}
	\begin{pgfonlayer}{nodelayer}
		\node [style=Hadamard] (0) at (0, 0) {};
		\node [style=none] (1) at (0, 0.5) {};
		\node [style=none] (2) at (0, -0.5) {};
	\end{pgfonlayer}
	\begin{pgfonlayer}{edgelayer}
		\draw (0.center) to (1);
		\draw (2.center) to (1);
	\end{pgfonlayer}
\end{tikzpicture}}

\newcommand{\scalar}[1]{\begin{tikzpicture}
	\begin{pgfonlayer}{nodelayer}
		\node [style={#1}] (1) at (0, 0) {};
	\end{pgfonlayer}
\end{tikzpicture}}

\newcommand{\innerprod}[2]{\begin{tikzpicture}
	\begin{pgfonlayer}{nodelayer}
		\node [style={#1}] (0) at (0, 0.6) {};
		\node [style={#2}] (1) at (0, -0.6) {};
	\end{pgfonlayer}
	\begin{pgfonlayer}{edgelayer}
		\draw (0) to (1);
	\end{pgfonlayer}
\end{tikzpicture}}

\newcommand{\innerprodgr}[0]{\begin{tikzpicture}
	\begin{pgfonlayer}{nodelayer}
		\node [style=gn] (0) at (0, 0.4) {};
		\node [style=rn] (1) at (0, -0.15) {};
	\end{pgfonlayer}
	\begin{pgfonlayer}{edgelayer}
		\draw (0) to (1);
	\end{pgfonlayer}
\end{tikzpicture}}

\newcommand{\halfscalar}[0]{\begin{tikzpicture}
	\begin{pgfonlayer}{nodelayer}
		\node [style=bstar] (1) at (0, 0) {};
	\end{pgfonlayer}
\end{tikzpicture}}

\newcommand{\genscalar}[1]{\begin{tikzpicture}
	\begin{pgfonlayer}{nodelayer}
		\node [style=diamond,draw=black,fill=white,inner sep=1pt,minimum size=0.4cm] (0) at (0, -0) {#1};
	\end{pgfonlayer}
 \end{tikzpicture}}

\tikzstyle{every picture}=[baseline=-0.1cm,scale=0.5]


\title{Making the stabilizer \ZX-calculus complete for scalars}
\author{Miriam Backens
\institute{Department of Computer Science, University of Oxford, Parks Road, Oxford, OX1 3QD, UK}
\email{miriam.backens@cs.ox.ac.uk}
}

\begin{document}

\maketitle

\begin{abstract}
 The \ZX-calculus is a graphical language for quantum processes with built-in rewrite rules.
 The rewrite rules allow equalities to be derived entirely graphically, leading to the question of completeness: can any equality that is derivable using matrices also be derived graphically?
 
 The \ZX-calculus is known to be complete for scalar-free pure qubit stabilizer quantum mechanics, meaning any equality between two pure stabilizer operators that is true up to a non-zero scalar factor can be derived using the graphical rewrite rules.
 Here, we replace those scalar-free rewrite rules with correctly scaled ones and show that, by adding one new diagram element and a new rewrite rule, the calculus can be made complete for pure qubit stabilizer quantum mechanics with scalars.
 This completeness property allows amplitudes and probabilities to be calculated entirely graphically.
 We also explicitly consider stabilizer zero diagrams, i.e.\ diagrams that represent a zero matrix, and show that two new rewrite rules suffice to make the calculus complete for those.
\end{abstract}

\section{Introduction}

Graphical languages provide high-level and intuitive formalisms for quantum processes.
The \ZX-calculus is one such language for pure qubit quantum mechanics, based on category theory \cite{coecke_interacting_2008}.
It has some features in common with quantum circuit notation, but unlike the former it also seamlessly integrates the representation of states and post-selected measurements.
Furthermore, the \ZX-calculus comes with a built-in set of rewrite rules that allow equalities to be derived graphically without the need to translate back to matrices.

The existence of rewrite rules for the \ZX-calculus means that this graphical calculus can potentially be used to replace matrix-based formalisms entirely for certain classes of problems.
However, this replacement is only possible without losing deductive power if the \ZX-calculus is \emph{complete}, i.e.\ if any equality that is derivable using matrices can also be derived graphically.

The overall \ZX-calculus for pure state qubit quantum mechanics is incomplete, and it is not obvious how to complete it \cite{schroeder_incomplete_2014}.
Yet the scalar-free \ZX-calculus for pure qubit stabilizer quantum mechanics is known to be complete \cite{backens_zx-calculus_2013}.
Stabilizer quantum mechanics is a fragment of quantum theory that exhibits many important quantum properties, like entanglement and non-locality.
It is of central importance in areas such as quantum error correcting codes \cite{nielsen_quantum_2010} and measurement-based quantum computation \cite{raussendorf_one-way_2001}.

There is one class of diagrams that the completeness proof in \cite{backens_zx-calculus_2013} ignores completely: those representing scalars.
In fact, some of the rewrite rules used in the proof are only true up to a non-zero scalar factor.
The existing completeness proof also passes over the issue of \emph{zero diagrams}, diagrams representing a zero matrix, relying implicitly on the fact that, within the stabilizer fragment of the \ZX-calculus, such diagrams can be easily recognised.

Here, we put the scalars back into the \ZX-calculus and show how to extend the completeness result to pure scalar diagrams as well as diagrams consisting of a scalar part and a non-scalar part.
To do this, we introduce a new type of \ZX-diagram element called the \emph{star node} and a new rewrite rule relating the star node to existing \ZX-calculus elements.
The new completeness result enables the stabilizer \ZX-calculus to be used to compute amplitudes and probabilities.
We also show explicitly how to recognise stabilizer zero diagrams.
The introduction of two new rewrite rules for zero diagrams allows the definition of a unique normal form for zero diagrams, and thus makes the \ZX-calculus complete for stabilizer zero diagrams.

The background for the new completeness proofs is given in Section \ref{s:ZX-calculus}.
This includes a quick introduction to the elements and rules of the \ZX-calculus, as well as an overview of the original stabilizer completeness proof.
Section \ref{s:scalars} contains the definitions of new \ZX-calculus components and rules.
The new completeness results can be found in section \ref{s:completeness}.
An example application for the new results is given in Section \ref{s:example}.
Finally, we state some conclusions and ideas for further work in Section \ref{s:conclusions}.

\section{The stabilizer \ZX-calculus}
\label{s:ZX-calculus}

The \ZX-calculus is a graphical language for pure state qubit quantum mechanics with post-selected measurements, introduced by Coecke and Duncan \cite{coecke_interacting_2008,coecke_interacting_2011} and expanded by Duncan and Perdrix \cite{duncan_graph_2009}.
Here, we are considering only the \ZX-calculus for stabilizer quantum mechanics \cite{nielsen_quantum_2010}, the fragment of pure qubit QM consisting of the following set of operations:
\begin{mitem}
 \item preparation of states in the computational basis
 \item unitary Clifford operations, generated by the single-qubit Hadamard operator, $H$, and the phase operator, $S$,
  \[
   H = \frac{1}{\sqrt{2}} \begin{pmatrix}1&1\\1&-1\end{pmatrix} \quad \text{and} \quad S = \begin{pmatrix}1&0\\0&i\end{pmatrix},
  \]
  as well as the two-qubit controlled-\NOT{} operator,
  \[
   C_X = \begin{pmatrix} 1&0&0&0 \\ 0&1&0&0 \\ 0&0&0&1 \\0&0&1&0 \end{pmatrix}
  \]
 \item measurements in the computational basis.
\end{mitem}
In the following sections, we give a quick introduction to the elements of a stabilizer \ZX-calculus diagram, their interpretations, and the rewrite rules.
More details about the \ZX-calculus can be found in \cite{coecke_interacting_2011,backens_zx-calculus_2013}.

\subsection{Elements of the stabilizer \ZX-calculus and the interpretation of diagrams}

A stabilizer \ZX-calculus diagram consists of different types of nodes connected by edges.
There are three types of nodes: green nodes with arbitrary numbers of inputs and outputs and a phase label $\alpha\in\{0,\pi/2,\pi,-\pi/2\}$, red nodes with arbitrary numbers of inputs and outputs and a phase label $\beta\in\{0,\pi/2,\pi,-\pi/2\}$, and yellow Hadamard nodes, which always have exactly one input and one output.

The matrix corresponding to a \ZX-calculus diagram $D$ is denoted by $\llbracket D \rrbracket$.
The basic nodes have the following interpretations:
\[
 \left\llbracket \begin{tikzpicture}[baseline=-0.1cm,decoration=brace]
	\begin{pgfonlayer}{nodelayer}
		\node [style=none] (0) at (-3.6, 1.0) {};
		\node [style=none] (1) at (-3.0, 0.9) {$\ldots$};
		\node [style=none] (2) at (-2.4, 1.0) {};
		\node [style=gn, label={[gphase]right:$\alpha$}] (3) at (-3.0, -0.0) {};
		\node [style=none] (5) at (-3.6, -1.0) {};
		\node [style=none] (6) at (-3.0, -0.9) {$\ldots$};
		\node [style=none] (7) at (-2.4, -1.0) {};
		\node [style=none] (8) at (-3.7, 1.1) {};
		\node [style=none] (9) at (-2.3, 1.1) {};
		\node [style=none] (10) at (-3.7, -1.1) {};
		\node [style=none] (11) at (-2.3, -1.1) {};
		\node [style=none] (12) at (-3.0, 1.6) {$m$};
		\node [style=none] (13) at (-3.0, -1.6) {$n$};
	\end{pgfonlayer}
	\begin{pgfonlayer}{edgelayer}
		\draw [bend left=15] (3) to (7.center);
		\draw [bend right=15] (3) to (5.center);
		\draw [bend right=15] (0.center) to (3);
		\draw [bend right=15] (3) to (2.center);
		\draw [decorate] (8.center) to (9.center);
		\draw [decorate] (11.center) to (10.center);
	\end{pgfonlayer}
\end{tikzpicture} \right\rrbracket = \ket{0}\t{m}\bra{0}\t{n} + e^{i\alpha} \ket{1}\t{m}\bra{1}\t{n}, \qquad
 \left\llbracket \begin{tikzpicture}[baseline=-0.1cm,decoration=brace]
	\begin{pgfonlayer}{nodelayer}
		\node [style=none] (0) at (-3.6, 1.0) {};
		\node [style=none] (1) at (-3.0, 0.9) {$\ldots$};
		\node [style=none] (2) at (-2.4, 1.0) {};
		\node [style=rn,label={[rphase]right:$\beta$}] (3) at (-3.0, -0.0) {};
		\node [style=none] (5) at (-3.6, -1.0) {};
		\node [style=none] (6) at (-3.0, -0.9) {$\ldots$};
		\node [style=none] (7) at (-2.4, -1.0) {};
		\node [style=none] (8) at (-3.7, 1.1) {};
		\node [style=none] (9) at (-2.3, 1.1) {};
		\node [style=none] (10) at (-3.7, -1.1) {};
		\node [style=none] (11) at (-2.3, -1.1) {};
		\node [style=none] (12) at (-3.0, 1.7) {$l$};
		\node [style=none] (13) at (-3.0, -1.7) {$k$};
	\end{pgfonlayer}
	\begin{pgfonlayer}{edgelayer}
		\draw [bend left=15] (3) to (7.center);
		\draw [bend right=15] (3) to (5.center);
		\draw [bend right=15] (0.center) to (3);
		\draw [bend right=15] (3) to (2.center);
		\draw [decorate] (8.center) to (9.center);
		\draw [decorate] (11.center) to (10.center);
	\end{pgfonlayer}
\end{tikzpicture} \right\rrbracket = \ket{+}\t{l}\bra{+}\t{k} + e^{i\beta} \ket{-}\t{l}\bra{-}\t{k}, \text{ and}
\]
\[
 \left\llbracket \Hadamard \right\rrbracket = \ket{+}\bra{0} + \ket{-}\bra{1},
\]
where $\ket{\pm} = \frac{1}{\sqrt{2}}\left(\ket{0}\pm\ket{1}\right)$, and the zero-fold tensor product of any normalised bra or ket is taken to be the scalar 1.

Diagrams are read from bottom to top, i.e.\ edges connecting to a node or diagram at the bottom are inputs, edges connecting at the top are outputs.
Zero phase angles are often left out, i.e.\ \effect{rn} is the same as \effect{rn, label={[rphase]right:$0$}}.
A diagram with no inputs and at least one output is a \emph{state}, e.g. \state{gn, label={[gphase]right:$\pi$}} or \begin{tikzpicture}
	\begin{pgfonlayer}{nodelayer}
		\node [style=none] (0) at (1, 0.25) {};
		\node [style=none] (1) at (2, 0.25) {};
		\node [style=none] (2) at (1.5, -0.25) {};
	\end{pgfonlayer}
	\begin{pgfonlayer}{edgelayer}
		\draw [bend right=45, looseness=1.00] (2.center) to (1.center);
		\draw [bend right=45, looseness=1.00] (0.center) to (2.center);
	\end{pgfonlayer}
\end{tikzpicture}.
Similarly, a diagram with neither inputs nor outputs is a \emph{scalar}. Examples of scalar diagrams are \scalar{rn, label={[rphase]right:$-\pi/2$}},
\begin{center}
 \begin{tikzpicture}
	\begin{pgfonlayer}{nodelayer}
		\node [style=gn] (0) at (0, -0) {};
		\node [label={[rphase]left:$\pi/2$}, style=rn] (1) at (-0.75, 1) {};
		\node [label={[rphase]right:$\pi/2$}, style=rn] (2) at (0.75, 1) {};
		\node [label={[rphase]left:$\pi/2$}, style=rn] (3) at (-0.75, -1) {};
		\node [label={[rphase]right:$\pi/2$}, style=rn] (4) at (0.75, -1) {};
	\end{pgfonlayer}
	\begin{pgfonlayer}{edgelayer}
		\draw [bend right=15, looseness=1.00] (1) to (0);
		\draw [bend left=15, looseness=1.00] (2) to (0);
		\draw [bend right=15, looseness=1.00] (0) to (3);
		\draw [bend left=15, looseness=1.00] (0) to (4);
	\end{pgfonlayer}
\end{tikzpicture}, $\quad$ or $\quad$ \begin{tikzpicture}
	\begin{pgfonlayer}{nodelayer}
		\node [label={[gphase]right:$-\pi/2$}, style=gn] (0) at (0, 1) {};
		\node [style=Hadamard] (1) at (0, -0) {};
		\node [style=rn] (2) at (0, -1) {};
	\end{pgfonlayer}
	\begin{pgfonlayer}{edgelayer}
		\draw (0) to (2);
	\end{pgfonlayer}
\end{tikzpicture}.
\end{center}
A diagram with inputs but no outputs is an \emph{effect}.
Red or green nodes with exactly one input and one output are called \emph{phase shifts}, e.g. \phase{gn, label={[gphase]right:$-\pi/2$}}.

Putting diagrams side by side corresponds to taking the tensor product of their matrix equivalents; connecting inputs and outputs corresponds to matrix multiplication.
The Hermitian adjoint of a diagram can be found by flipping the original diagram upside-down and flipping the signs of all phase angles.

\subsection{Rewrite Rules}
\label{s:rewrite_rules}

The \ZX-calculus was originally defined with rewrite rules that are true up to a complex phase $e^{i\phi}$ for some $\phi\in[0,2\pi)$ \cite{coecke_interacting_2011}.
The normalising factors were later dropped, meaning that any equalities derived graphically hold up to some non-zero complex scalar factor.

The standard rewrite rules of the \ZX-calculus include two implicit rules:
\begin{center}
 ``ignore all non-zero scalars'' and ``only the topology matters''.
\end{center}
The former rule allows any non-zero scalar subdiagram to be replaced by the empty diagram; when we put the scalars back, in the later parts of this paper, we will obviously no longer be using this rule.
The topology rule means that the interpretation of a diagram depends only on the elements and the structure of their connections: nodes can be moved around and edges can be bent or straightened without changing the meaning of the diagram.
This rule remains valid in the scaled \ZX-calculus.

We now give the standard set of rewrite rules for the \ZX-calculus, using the following two short-hand conventions: firstly, any rule also holds with the colours red and green swapped.
Secondly, any rule also holds when flipped upside-down.

\textbf{Spider rule, loop rule, and cup rule}: Two nodes of the same colour can merge if they are connected by an edge, in that case their phases add.
Self-loops can be removed.
A node with two outputs and no inputs is the same as an edge bent into the shape of a cup.
\begin{center}
 \begin{tikzpicture}
	\begin{pgfonlayer}{nodelayer}
		\node [style=none] (0) at (-4.5, 1) {};
		\node [style=none] (1) at (-4, 0.9) {$\ldots$};
		\node [style=none] (2) at (-3.5, 1) {};
		\node [style=none] (3) at (-2.5, 1) {};
		\node [style=none] (4) at (-2, 0.9) {$\ldots$};
		\node [style=none] (5) at (-1.5, 1) {};
		\node [style=none] (6) at (0.5, 1) {};
		\node [style=none] (7) at (1.25, 0.9) {$\ldots$};
		\node [style=none] (8) at (2, 1) {};
		\node [style=none] (9) at (2.5, 1) {};
		\node [style=none] (10) at (3.25, 0.9) {$\ldots$};
		\node [style=none] (11) at (4, 1) {};
		\node [style=gn, label={[gphase]left:$\alpha$}] (12) at (-4, 0.25) {};
		\node [style=none] (13) at (0, -0) {$=$};
		\node [style=gn, label={[gphase]right:$\alpha+\beta$}] (14) at (2.25, -0) {};
		\node [style=gn, label={[gphase]right:$\beta$}] (15) at (-2, -0.25) {};
		\node [style=none] (16) at (0.5, -1) {};
		\node [style=none] (17) at (1.25, -0.9) {$\ldots$};
		\node [style=none] (18) at (2, -1) {};
		\node [style=none] (19) at (2.5, -1) {};
		\node [style=none] (20) at (3.25, -0.9) {$\ldots$};
		\node [style=none] (21) at (4, -1) {};
		\node [style=none] (22) at (-4.5, -1) {};
		\node [style=none] (23) at (-4, -0.9) {$\ldots$};
		\node [style=none] (24) at (-3.5, -1) {};
		\node [style=none] (25) at (-2.5, -1) {};
		\node [style=none] (26) at (-2, -0.9) {$\ldots$};
		\node [style=none] (27) at (-1.5, -1) {};
	\end{pgfonlayer}
	\begin{pgfonlayer}{edgelayer}
		\draw [bend right=15, looseness=1.00] (14) to (16.center);
		\draw (9.center) to (14);
		\draw (14) to (19.center);
		\draw [bend left=15, looseness=1.00] (5.center) to (15);
		\draw [bend right=15, looseness=1.00] (12) to (22.center);
		\draw [bend right=15, looseness=1.00] (6.center) to (14);
		\draw [bend left=15, looseness=1.00] (12) to (24.center);
		\draw [bend right=15, looseness=1.00] (0.center) to (12);
		\draw [bend left=15, looseness=1.00] (14) to (21.center);
		\draw [bend left=15, looseness=1.00] (15) to (27.center);
		\draw [bend right=15, looseness=1.00] (12) to (2.center);
		\draw [bend left=15, looseness=1.00] (11.center) to (14);
		\draw [bend right=15, looseness=1.00] (15) to (25.center);
		\draw (14) to (18.center);
		\draw [bend right=15, looseness=1.00] (3.center) to (15);
		\draw (8.center) to (14);
		\draw (12) to (15);
	\end{pgfonlayer}
\end{tikzpicture} $\qquad\qquad$ \begin{tikzpicture}[baseline=-0.1cm, decoration=brace]
	\begin{pgfonlayer}{nodelayer}
		\node [style=none] (0) at (-2.5, 1) {};
		\node [style=none] (1) at (-2, 0.9) {$\ldots$};
		\node [style=none] (2) at (-1.5, 1) {};
		\node [style=gn, label={[gphase]right:$\alpha$}] (3) at (-2, -0) {};
		\node [style=none] (4) at (-2.5, -1) {};
		\node [style=none] (5) at (-2, -0.9) {$\ldots$};
		\node [style=none] (6) at (-1.5, -1) {};
		\node [style=none] (7) at (-3.25, -0) {};
		\node [style=none] (8) at (-0, -0) {$=$};
		\node [style=none] (9) at (1.25, 0.9) {$\ldots$};
		\node [style=none] (10) at (1.75, 1) {};
		\node [style=none] (11) at (0.75, 1) {};
		\node [style=none] (12) at (1.75, -1) {};
		\node [style=none] (13) at (0.75, -1) {};
		\node [label={[gphase]right:$\alpha$}, style=gn] (14) at (1.25, -0) {};
		\node [style=none] (15) at (1.25, -0.9) {$\ldots$};
		\node [style=none] (16) at (-2.75, 0.5) {};
		\node [style=none] (17) at (-2.75, -0.5) {};
	\end{pgfonlayer}
	\begin{pgfonlayer}{edgelayer}
		\draw [bend left=15, looseness=1.00] (3) to (6.center);
		\draw [bend right=15, looseness=1.00] (3) to (4.center);
		\draw [bend right=15, looseness=1.00] (0.center) to (3);
		\draw [bend right=15, looseness=1.00] (3) to (2.center);
		\draw [bend left=15, looseness=1.00] (14) to (12.center);
		\draw [bend right=15, looseness=1.00] (14) to (13.center);
		\draw [bend right=15, looseness=1.00] (11.center) to (14);
		\draw [bend right=15, looseness=1.00] (14) to (10.center);
		\draw [in=0, out=146, looseness=0.75] (3) to (16.center);
		\draw [bend right=45, looseness=1.00] (16.center) to (7.center);
		\draw [bend right=45, looseness=1.00] (7.center) to (17.center);
		\draw [in=-146, out=0, looseness=0.75] (17.center) to (3);
	\end{pgfonlayer}
\end{tikzpicture} $\qquad\qquad$ \begin{tikzpicture}
	\begin{pgfonlayer}{nodelayer}
		\node [style=none] (0) at (-2, 0.25) {};
		\node [style=none] (1) at (-1, 0.25) {};
		\node [style=none] (2) at (1, 0.25) {};
		\node [style=none] (3) at (2, 0.25) {};
		\node [style=none] (4) at (0, -0) {$=$};
		\node [style=gn] (5) at (-1.5, -0.25) {};
		\node [style=none] (6) at (1.5, -0.25) {};
	\end{pgfonlayer}
	\begin{pgfonlayer}{edgelayer}
		\draw [bend right=45, looseness=1.00] (0.center) to (5);
		\draw [bend right=45, looseness=1.00] (5) to (1.center);
		\draw [bend right=45, looseness=1.00] (6.center) to (3.center);
		\draw [bend right=45, looseness=1.00] (2.center) to (6.center);
	\end{pgfonlayer}
\end{tikzpicture}
\end{center}

\textbf{Bialgebra rule, copy rule, and colour change rule}: The bialgebra rule allows a certain pattern of two red and two green nodes to be replaced by just one red and green node.
A node of one colour with one input and two outputs copies the zero phase state of the other colour.
The Hadamard node swaps the colour of red and green nodes when it is applied to each input and output.
\begin{center}
 \begin{tikzpicture}
	\begin{pgfonlayer}{nodelayer}
		\node [style=none] (0) at (-2.75, 1.25) {};
		\node [style=none] (1) at (-1.25, 1.25) {};
		\node [style=none] (2) at (0.75, 1.25) {};
		\node [style=none] (3) at (1.75, 1.25) {};
		\node [style=gn] (4) at (-2.75, 0.75) {};
		\node [style=gn] (5) at (-1.25, 0.75) {};
		\node [style=rn] (6) at (1.25, 0.5) {};
		\node [style=none] (7) at (0, -0) {$=$};
		\node [style=gn] (8) at (1.25, -0.5) {};
		\node [style=rn] (9) at (-2.75, -0.75) {};
		\node [style=rn] (10) at (-1.25, -0.75) {};
		\node [style=none] (11) at (-2.75, -1.25) {};
		\node [style=none] (12) at (-1.25, -1.25) {};
		\node [style=none] (13) at (0.75, -1.25) {};
		\node [style=none] (14) at (1.75, -1.25) {};
	\end{pgfonlayer}
	\begin{pgfonlayer}{edgelayer}
		\draw (9) to (11.center);
		\draw [bend left=15, looseness=1.00] (3.center) to (6);
		\draw [bend left, looseness=1.00] (5) to (10);
		\draw (10) to (12.center);
		\draw (6) to (8);
		\draw (5) to (9);
		\draw (1.center) to (5);
		\draw [bend right, looseness=1.00] (4) to (9);
		\draw (4) to (10);
		\draw (0.center) to (4);
		\draw [bend right=15, looseness=1.00] (8) to (13.center);
		\draw [bend right=15, looseness=1.00] (2.center) to (6);
		\draw [bend left=15, looseness=1.00] (8) to (14.center);
	\end{pgfonlayer}
\end{tikzpicture} $\qquad\qquad$ \begin{tikzpicture}
	\begin{pgfonlayer}{nodelayer}
		\node [style=none] (0) at (-1.75, 0.75) {};
		\node [style=rn] (1) at (-1.25, -0) {};
		\node [style=none] (2) at (-0.75, 0.75) {};
		\node [style=gn] (3) at (-1.25, -0.75) {};
		\node [style=none] (4) at (0, -0) {$=$};
		\node [style=none] (5) at (1, 0.5) {};
		\node [style=gn] (6) at (1, -0.25) {};
		\node [style=none] (7) at (2, 0.5) {};
		\node [style=gn] (8) at (2, -0.25) {};
	\end{pgfonlayer}
	\begin{pgfonlayer}{edgelayer}
		\draw [bend right, looseness=1.00] (0.center) to (1);
		\draw (1) to (3.center);
		\draw [bend left, looseness=1.00] (2.center) to (1);
		\draw (5.center) to (6);
		\draw (7.center) to (8);
	\end{pgfonlayer}
\end{tikzpicture} $\qquad\qquad$ \begin{tikzpicture}
	\begin{pgfonlayer}{nodelayer}
		\node [style=rn, label={[rphase]right:$\alpha$}] (0) at (-2, -0) {};
		\node [style=none] (1) at (-2.75, 1) {};
		\node [style=none] (2) at (-1.25, 1) {};
		\node [style=none] (3) at (-2.75, -1) {};
		\node [style=none] (4) at (-1.25, -1) {};
		\node [style=none] (5) at (0, -0) {$=$};
		\node [style=gn, label={[gphase]right:$\alpha$}] (6) at (2, -0) {};
		\node [style=Hadamard] (7) at (1, 1) {};
		\node [style=Hadamard] (8) at (3, 1) {};
		\node [style=Hadamard] (9) at (1, -1) {};
		\node [style=Hadamard] (10) at (3, -1) {};
		\node [style=none] (11) at (1, 1.5) {};
		\node [style=none] (12) at (3, 1.5) {};
		\node [style=none] (13) at (1, -1.5) {};
		\node [style=none] (14) at (3, -1.5) {};
		\node [style=none] (15) at (2, 1) {$\ldots $};
		\node [style=none] (16) at (2, -1) {$\ldots $};
		\node [style=none] (17) at (-2, 0.9) {$\ldots $};
		\node [style=none] (18) at (-2, -0.9) {$\ldots $};
	\end{pgfonlayer}
	\begin{pgfonlayer}{edgelayer}
		\draw [bend right=15, looseness=1.00] (1.center) to (0);
		\draw [bend right=15, looseness=1.00] (0) to (3.center);
		\draw [bend left=15, looseness=1.00] (2.center) to (0);
		\draw [bend left=15, looseness=1.00] (0) to (4.center);
		\draw (11.center) to (7);
		\draw [bend right=15, looseness=1.00] (7) to (6);
		\draw [bend right=15, looseness=1.00] (6) to (8);
		\draw (8) to (12.center);
		\draw [bend right=15, looseness=1.00] (6) to (9);
		\draw (9) to (13.center);
		\draw (10) to (14.center);
		\draw [bend left=15, looseness=1.00] (6) to (10);
	\end{pgfonlayer}
\end{tikzpicture}
\end{center}

\textbf{$\pi$-copy rule, $\pi$-commutation rule, and Euler decomposition of the Hadamard operator}: A $\pi$ phase shift is copied by a node of the other colour.
It can also be moved past any phase shift of the other colour, flipping the sign of that phase in the process.
The Hadamard node can be replaced by three green and red nodes of alternating colours, each with phase $\pi/2$.
\begin{center}
 \begin{tikzpicture}
	\begin{pgfonlayer}{nodelayer}
		\node [style=none] (0) at (-2.5, 0.75) {};
		\node [style=rn] (1) at (-1.75, -0.25) {};
		\node [style=none] (2) at (-1, 0.75) {};
		\node [style=none] (3) at (-1.75, -1.5) {};
		\node [style=none] (4) at (0, -0) {$=$};
		\node [label={[gphase]right:$\pi$}, style=gn] (5) at (-1.75, -1) {};
		\node [style=none] (6) at (-1.75, 0.65) {$\ldots$};
		\node [style=gn, label={[gphase]left:$\pi$}] (7) at (2, -0) {};
		\node [style=rn] (8) at (2.75, -0.75) {};
		\node [style=none] (9) at (2.75, -1.25) {};
		\node [style=none] (10) at (3.5, 0.5) {};
		\node [style=none] (11) at (2.75, -0) {$\ldots$};
		\node [style=none] (12) at (2, 0.5) {};
		\node [label={[gphase]right:$\pi$}, style=gn] (13) at (3.5, -0) {};
	\end{pgfonlayer}
	\begin{pgfonlayer}{edgelayer}
		\draw [bend right, looseness=1.00] (0.center) to (1);
		\draw (1) to (3.center);
		\draw [bend left, looseness=1.00] (2.center) to (1);
		\draw (8) to (9.center);
		\draw [bend left, looseness=1.00] (8) to (7);
		\draw (7) to (12.center);
		\draw [bend right, looseness=1.00] (8) to (13);
		\draw (13) to (10.center);
	\end{pgfonlayer}
\end{tikzpicture} $\qquad\qquad$ \begin{tikzpicture}
	\begin{pgfonlayer}{nodelayer}
		\node [style=rn, label={[rphase]right:$\alpha$}] (0) at (-2.25, -0.5) {};
		\node [style=none] (1) at (-2.25, 1) {};
		\node [style=none] (2) at (-2.25, -1) {};
		\node [style=none] (3) at (0, -0) {$=$};
		\node [style=gn, label={[gphase]right:$\pi$}] (4) at (-2.25, 0.5) {};
		\node [label={[rphase]right:$-\alpha$}, style=rn] (5) at (1.25, 0.5) {};
		\node [style=none] (6) at (1.25, 1) {};
		\node [style=none] (7) at (1.25, -1) {};
		\node [label={[gphase]right:$\pi$}, style=gn] (8) at (1.25, -0.5) {};
	\end{pgfonlayer}
	\begin{pgfonlayer}{edgelayer}
		\draw (0) to (2.center);
		\draw (1.center) to (0);
		\draw (5) to (7.center);
		\draw (6.center) to (5);
	\end{pgfonlayer}
\end{tikzpicture} $\qquad\qquad$ \begin{tikzpicture}
	\begin{pgfonlayer}{nodelayer}
		\node [style=none] (0) at (-1.25, 0.75) {};
		\node [style=none] (1) at (1.25, 1.75) {};
		\node [style=gn, label={[gphase]right:$\pi/2$}] (2) at (1.25, 1.25) {};
		\node [style=Hadamard] (3) at (-1.25, -0) {};
		\node [style=none] (4) at (0, -0) {$=$};
		\node [style=rn, label={[rphase]right:$\pi/2$}] (5) at (1.25, -0) {};
		\node [style=gn, label={[gphase]right:$\pi/2$}] (6) at (1.25, -1.25) {};
		\node [style=none] (7) at (-1.25, -0.75) {};
		\node [style=none] (8) at (1.25, -1.75) {};
	\end{pgfonlayer}
	\begin{pgfonlayer}{edgelayer}
		\draw (2) to (5);
		\draw (5) to (6);
		\draw (0.center) to (3);
		\draw (6) to (8.center);
		\draw (1.center) to (2);
		\draw (3) to (7.center);
	\end{pgfonlayer}
\end{tikzpicture}
\end{center}

Whenever a rule holds for any number of edges, that number may be zero.
For example, the colour change rule with zero inputs and zero outputs implies that \scalar{rn, label={[rphase]right:$\alpha$}} = \scalar{gn, label={[gphase]right:$\alpha$}} for any $\alpha$.

Further rewrite rules can be derived from the above, e.g.\ the fact that \Hadamard{} is self-inverse.

\subsection{The original stabilizer \ZX-calculus completeness proof and two corollaries}

By completeness, we mean the following concept.

\begin{dfn}
 The \ZX-calculus is \emph{complete} for a subtheory of pure qubit quantum mechanics if any equality that holds in the matrix formulation of that theory can also be derived graphically.
\end{dfn}

The scalar-free stabilizer \ZX-calculus completeness proof \cite{backens_zx-calculus_2013} is based around a normal form for stabilizer state diagrams, i.e.\ diagrams that have outputs but no inputs.
This normal form is not unique, but it is straightforward to derive any equalities between diagrams given in normal form. 
By map-state duality, the completeness result extends to diagrams that have inputs as well as, or instead of, outputs.

The original completeness proof drops any scalar subdiagrams by the ``ignore all non-zero scalars'' meta-rule.
If we instead keep track of scalars, we can use the normal form proof to derive  results about scalar diagrams via the following trick: rewrite any connected scalar diagram into the inner product between some (possibly complicated) one-qubit state and \effect{gn}.
The ``state'' part can then be normalised using the algorithms given in the completeness proof.
This approach leads to the following results, which were implicit in \cite{backens_zx-calculus_2013} and are stated explicitly here.
Proofs of the corollaries are given in Appendix \ref{s:corollary_proofs}.

\begin{cor}\label{cor:decompose_scalars}
 Any stabilizer scalar diagram can be decomposed into disconnected segments containing at most two nodes each.
\end{cor}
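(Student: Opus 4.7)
The plan is to reduce to connected scalars and then leverage the single-qubit stabilizer state normal form from the scalar-free completeness proof of \cite{backens_zx-calculus_2013}. Since disconnected subdiagrams interpret as tensor products and any connected component of a scalar diagram is itself a scalar diagram, I would argue component by component: if every connected component admits a decomposition into at-most-two-node segments, so does the whole diagram.

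For a connected scalar diagram $S$, the next step is to apply the ``inner product trick'' mentioned just before the corollary. I would pick any green node in $S$ (converting a red one by the colour change rule if necessary) and use the spider rule in reverse to split it, introducing a new edge terminating in the green effect \effect{gn}. This rewrites $S$ as the composition of a one-qubit state diagram $\ket{\psi}$ (absorbing the rest of $S$) with \effect{gn} on its one output.

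Next, I would apply the single-qubit specialisation of the stabilizer state normal form algorithm of \cite{backens_zx-calculus_2013} to $\ket{\psi}$. On a single qubit the normal form is especially simple: up to a scalar prefactor, every stabilizer state is representable by at most one coloured node (possibly with a single Hadamard attached) carrying a phase from $\{0, \pi/2, \pi, -\pi/2\}$. Composing this with \effect{gn} yields an inner product subdiagram of at most two nodes, sitting alongside the scalar prefactor produced by the normalisation.

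The hard part is controlling the scalar prefactors produced during the normalisation, which the scalar-free proof simply discards. I would handle these by inspecting each elementary rewrite used in the normal form algorithm (spider fusion, bialgebra, colour change, Euler decomposition, and the local-complementation-style simplifications) and verifying that, once scalars are tracked, each introduces only atomic scalars realisable by a zero-, one-, or two-node diagram --- for instance a factor of $\sqrt{2}$ arising from \innerprodgr{} and phase factors arising as isolated single nodes. Collecting all such atomic pieces as disjoint segments together with the at-most-two-node inner product subdiagram above gives the required decomposition.
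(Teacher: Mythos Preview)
Your proposal is correct and follows essentially the same route as the paper: work component by component, use the inner-product trick to write each connected scalar as a one-qubit state composed with \effect{gn}, bring that state into single-qubit normal form via the algorithm of \cite{backens_zx-calculus_2013}, and verify that every scalar thrown off during normalisation is itself an at-most-two-node piece. The only cosmetic difference is how the green effect is extracted --- the paper decomposes into basic spiders and observes that \effect{gn} is the unique output-free basic spider, whereas you manufacture it by reverse spider fusion --- but this is inessential.
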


\begin{cor}\label{cor:recognize_zero}
 When a stabilizer zero diagram, i.e.\ any stabilizer diagram that represents a zero matrix, is brought into normal form and all scalar subdiagrams are decomposed as in Corollary \ref{cor:decompose_scalars}, the resulting diagram explicitly contains at least one of
 \begin{equation}\label{eq:zero_scalars}
  \scalar{gn, label={[gphase]right:$\pi$}}, \quad\quad \scalar{rn, label={[rphase]right:$\pi$}}, \quad\quad \innerprod{gn, label={[gphase]right:$\pi/2$}}{rn, label={[rphase]right:$-\pi/2$}}, \quad \text{ or } \quad \innerprod{gn, label={[gphase]right:$-\pi/2$}}{rn, label={[rphase]right:$\pi/2$}}.
 \end{equation}
\end{cor}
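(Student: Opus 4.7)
The plan rests on three ingredients: the completeness proof's normal form exhibits any stabilizer diagram $D$ as a scalar factor $s$ times a non-zero ``state part''; Corollary~\ref{cor:decompose_scalars} decomposes $s$ into small disconnected pieces; and a product of complex numbers vanishes iff one of its factors does. First I would bring $D$ into normal form so that $\llbracket D \rrbracket = s \cdot \llbracket D' \rrbracket$ with $D'$ in reduced form. Reduced stabilizer diagrams always represent a non-zero matrix (stabilizer states have non-zero norm, and the extension to operators via map--state duality preserves this), so if $\llbracket D \rrbracket = 0$ then $s = 0$.

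Next I apply Corollary~\ref{cor:decompose_scalars} to write $s$ as a product of disconnected subdiagrams, each containing at most two nodes. Since a product in $\CC$ is zero iff at least one factor is zero, at least one of these atomic pieces must itself represent $0$. It therefore suffices to classify the stabilizer scalars on at most two nodes and identify the vanishing ones.

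The classification is a short finite check. A single-node scalar must be green or red (a Hadamard has fixed arity and cannot stand alone), with value $1 + e^{i\alpha}$; this is zero exactly when $\alpha = \pi$, yielding the first two diagrams of~(\ref{eq:zero_scalars}). A two-node scalar of matching colour collapses via the spider and loop rules to the single-node case and is already covered. A two-node scalar of opposite colours joined by a single edge is an inner product $\innerprod{gn, label={[gphase]right:$\alpha$}}{rn, label={[rphase]right:$\beta$}}$, which a direct calculation evaluates to $\tfrac{1}{\sqrt{2}}(1 + e^{i\alpha} + e^{i\beta} - e^{i(\alpha + \beta)})$; running through the sixteen admissible pairs $(\alpha,\beta) \in \{0, \pi/2, \pi, -\pi/2\}^2$ shows vanishing only for $(\pi/2,-\pi/2)$ and $(-\pi/2,\pi/2)$, which are the remaining two diagrams of~(\ref{eq:zero_scalars}). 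The main obstacle I anticipate is ruling out exotic two-node shapes --- multi-edge green--red joins and loops containing a Hadamard node --- by arguing either that the decomposition algorithm supporting Corollary~\ref{cor:decompose_scalars} never outputs them, or that such configurations further reduce via bialgebra, colour-change, and Euler-decomposition rewrites to products of the basic zero scalars already enumerated.
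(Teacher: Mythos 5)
Your proposal is correct and follows essentially the same route as the paper: the GS-LC normal form has a non-zero non-scalar part (unitaries applied to $\state{gn}\ldots\state{gn}$), so a zero diagram forces the scalar part to vanish, the product structure forces some disconnected component to vanish, and a finite check of the one- and two-node scalars produced by Corollary \ref{cor:decompose_scalars} identifies exactly the four diagrams in \eqref{eq:zero_scalars}. Your anticipated obstacle about exotic two-node shapes is resolved the way you suggest: the decomposition underlying Corollary \ref{cor:decompose_scalars} only ever outputs single nodes and single-edge red--green pairs.
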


In the scalar-free \ZX-calculus, it is straightforward to show that the four scalars given in \eqref{eq:zero_scalars} are all equal to each other.
We will show in Section \ref{s:zero_completeness} that they can also be rewritten into each other in the scaled \ZX-calculus.

Corollary \ref{cor:decompose_scalars} implies that in order to prove completeness for stabilizer scalar diagrams, it suffices to look only at diagrams made up of small disconnected segments.
Corollary \ref{cor:recognize_zero} shows that there is an algorithm for deciding whether a diagram in the stabilizer \ZX-calculus is a zero diagram.
This is necessary if we want to consider non-zero diagrams only, and has already been implicitly assumed in the scalar-free completeness proof.

\section{Putting the scalars back in}
\label{s:scalars}

Building on the scalar-free stabilizer \ZX-calculus completeness proof, we now show how to ``put the scalars back in'' and make the stabilizer \ZX-calculus complete for scalar diagrams and scaled operators, as well as for zero diagrams.
To do this, we first modify the rewrite rules to be exactly true (not just up to scalar factor).
We then argue that a new node is required to make the calculus complete for non-zero scalars.
Introduction of this node and a new rewrite rule for it allows us to prove completeness for non-zero stabilizer scalars.
Finally, we prove a normal form theorem for zero diagrams, based on a further two new rewrite rules specifically for diagrams containing the zero scalar.

\subsection{Normalisation of scalar diagrams in the \ZX-calculus}
\label{s:normalisation}

The basic states and effects in the \ZX-calculus are not normalised to 1, e.g.:
\begin{equation}
 \left\llbracket \state{gn} \right\rrbracket = \sqrt{2}\ket{+} \quad\text{and}\quad
 \left\llbracket \effect{gn} \right\rrbracket = \sqrt{2}\bra{+},
\end{equation}
and similarly for nodes with more outputs.
This normalisation was chosen to avoid having to include normalising factors in the spider law.
As a result, inner products of a red state and green effect (or conversely) are either zero or have norm strictly greater than 1.
For example:
\begin{equation}
 \left\llbracket\innerprodgr\right\rrbracket = \sqrt{2}, \qquad
 \left\llbracket\innerprod{gn,label={[gphase]right:$\alpha$}}{rn,label={[rphase]right:$\pi$}}\right\rrbracket = \sqrt{2}e^{i\alpha}, \quad \text{and} \quad
 \left\llbracket\innerprod{gn,label={[gphase]right:$-\pi/2$}}{rn,label={[rphase]right:$-\pi/2$}}\right\rrbracket = 2e^{-i\pi/4}. \label{eq:scalars}
\end{equation}
The node-free loop \begin{tikzpicture}
	\begin{pgfonlayer}{nodelayer}
		\node [style=none] (0) at (-0.5, -0) {};
		\node [style=none] (1) at (0, 0.5) {};
		\node [style=none] (2) at (0.5, -0) {};
		\node [style=none] (3) at (0, -0.5) {};
	\end{pgfonlayer}
	\begin{pgfonlayer}{edgelayer}
		\draw [bend left=45, looseness=1.00] (1.center) to (2.center);
		\draw [bend left=45, looseness=1.00] (2.center) to (3.center);
		\draw [bend left=45, looseness=1.00] (3.center) to (0.center);
		\draw [bend left=45, looseness=1.00] (0.center) to (1.center);
	\end{pgfonlayer}
\end{tikzpicture} is also a scalar.
By the cup rule and the loop rule:
\begin{equation}
 \left\llbracket\begin{tikzpicture}
	\begin{pgfonlayer}{nodelayer}
		\node [style=none] (0) at (-0.5, -0) {};
		\node [style=none] (1) at (0, 0.5) {};
		\node [style=none] (2) at (0.5, -0) {};
		\node [style=none] (3) at (0, -0.5) {};
	\end{pgfonlayer}
	\begin{pgfonlayer}{edgelayer}
		\draw [bend left=45, looseness=1.00] (1.center) to (2.center);
		\draw [bend left=45, looseness=1.00] (2.center) to (3.center);
		\draw [bend left=45, looseness=1.00] (3.center) to (0.center);
		\draw [bend left=45, looseness=1.00] (0.center) to (1.center);
	\end{pgfonlayer}
\end{tikzpicture}\right\rrbracket = \left\llbracket\begin{tikzpicture}
	\begin{pgfonlayer}{nodelayer}
		\node [style=none] (0) at (-0.5, -0) {};
		\node [style=none] (1) at (0, 0.5) {};
		\node [style=none] (2) at (0.5, -0) {};
		\node [style=gn] (3) at (0, -0.5) {};
	\end{pgfonlayer}
	\begin{pgfonlayer}{edgelayer}
		\draw [bend left=45, looseness=1.00] (1.center) to (2.center);
		\draw [bend left=45, looseness=1.00] (2.center) to (3);
		\draw [bend left=45, looseness=1.00] (3) to (0.center);
		\draw [bend left=45, looseness=1.00] (0.center) to (1.center);
	\end{pgfonlayer}
\end{tikzpicture}\right\rrbracket = \left\llbracket\scalar{gn}\right\rrbracket = 2.
\end{equation}

\subsection{Modified rewrite rules}

Some of the rewrite rules for the \ZX-calculus as given in section \ref{s:rewrite_rules} are sound only when equality is taken up to a non-zero scalar factor.
Other rewrite rules are already correctly scaled.
We now give the corrected versions of those rewrite rules that need modification.
For some cases, this means going back to the original form of the rewrite rules introduced  in \cite{coecke_interacting_2011}, before scalars were dropped.
This applies to the bialgebra rule and the copy rule:
\begin{center}
 \begin{tikzpicture}
	\begin{pgfonlayer}{nodelayer}
		\node [style=none] (0) at (-2.75, 1.25) {};
		\node [style=none] (1) at (-1.25, 1.25) {};
		\node [style=none] (2) at (0.75, 1.25) {};
		\node [style=none] (3) at (1.75, 1.25) {};
		\node [style=gn] (4) at (-2.75, 0.75) {};
		\node [style=gn] (5) at (-1.25, 0.75) {};
		\node [style=rn] (6) at (1.25, 0.5) {};
		\node [style=none] (7) at (0, -0) {$=$};
		\node [style=gn] (8) at (1.25, -0.5) {};
		\node [style=rn] (9) at (-2.75, -0.75) {};
		\node [style=rn] (10) at (-1.25, -0.75) {};
		\node [style=none] (11) at (-2.75, -1.25) {};
		\node [style=none] (12) at (-1.25, -1.25) {};
		\node [style=none] (13) at (0.75, -1.25) {};
		\node [style=none] (14) at (1.75, -1.25) {};
		\node [style=gn] (15) at (-4, 0.5) {};
		\node [style=rn] (16) at (-4, -0.5) {};
	\end{pgfonlayer}
	\begin{pgfonlayer}{edgelayer}
		\draw (9) to (11.center);
		\draw [bend left=15, looseness=1.00] (3.center) to (6);
		\draw [bend left, looseness=1.00] (5) to (10);
		\draw (10) to (12.center);
		\draw (6) to (8);
		\draw (5) to (9);
		\draw (1.center) to (5);
		\draw [bend right, looseness=1.00] (4) to (9);
		\draw (4) to (10);
		\draw (0.center) to (4);
		\draw [bend right=15, looseness=1.00] (8) to (13.center);
		\draw [bend right=15, looseness=1.00] (2.center) to (6);
		\draw [bend left=15, looseness=1.00] (8) to (14.center);
		\draw (15) to (16);
	\end{pgfonlayer}
\end{tikzpicture} $\quad$ and $\quad$ \begin{tikzpicture}
	\begin{pgfonlayer}{nodelayer}
		\node [style=none] (0) at (-1.75, 0.75) {};
		\node [style=rn] (1) at (-1.25, -0) {};
		\node [style=none] (2) at (-0.75, 0.75) {};
		\node [style=gn] (3) at (-1.25, -0.75) {};
		\node [style=none] (4) at (0, -0) {$=$};
		\node [style=none] (5) at (1, 0.5) {};
		\node [style=gn] (6) at (1, -0.25) {};
		\node [style=none] (7) at (2, 0.5) {};
		\node [style=gn] (8) at (2, -0.25) {};
		\node [style=gn] (9) at (-2.75, 0.5) {};
		\node [style=rn] (10) at (-2.75, -0.5) {};
	\end{pgfonlayer}
	\begin{pgfonlayer}{edgelayer}
		\draw [bend right, looseness=1.00] (0.center) to (1);
		\draw (1) to (3.center);
		\draw [bend left, looseness=1.00] (2.center) to (1);
		\draw (5.center) to (6);
		\draw (7.center) to (8);
		\draw (9) to (10);
	\end{pgfonlayer}
\end{tikzpicture},
\end{center}
as well as the \emph{Hopf law}\footnote{The Hopf law can be derived from the other rules of the \ZX-calculus, which is why it is not included in Section \ref{s:rewrite_rules}.}:
\begin{equation}\label{rr:Hopf}
 \begin{tikzpicture}
	\begin{pgfonlayer}{nodelayer}
		\node [style=gn] (0) at (-1.5, 0.75) {};
		\node [style=none] (1) at (-2, -0) {};
		\node [style=rn] (2) at (-1.5, -0.75) {};
		\node [style=none] (3) at (-1, -0) {};
		\node [style=none] (4) at (-1.5, 1.25) {};
		\node [style=none] (5) at (-1.5, -1.25) {};
		\node [style=none] (6) at (0, -0) {$=$};
		\node [style=none] (7) at (1, 1.25) {};
		\node [style=none] (8) at (1, -1.25) {};
		\node [style=gn] (9) at (1, 0.75) {};
		\node [style=rn] (10) at (1, -0.75) {};
		\node [style=gn] (11) at (-3, 0.5) {};
		\node [style=gn] (12) at (-3.75, 0.5) {};
		\node [style=rn] (13) at (-3, -0.5) {};
		\node [style=rn] (14) at (-3.75, -0.5) {};
	\end{pgfonlayer}
	\begin{pgfonlayer}{edgelayer}
		\draw (4.center) to (0);
		\draw [bend right, looseness=1.00] (0) to (1.center);
		\draw [bend right, looseness=1.00] (1.center) to (2);
		\draw (2) to (5.center);
		\draw [bend left, looseness=1.00] (0) to (3.center);
		\draw [bend left, looseness=1.00] (3.center) to (2);
		\draw (7.center) to (9);
		\draw (10) to (8.center);
		\draw (11) to (13);
		\draw (12) to (14);
	\end{pgfonlayer}
\end{tikzpicture}\;.
\end{equation}

Two other rules are normalised, but require the introduction of a complex phase factor in order to hold exactly.
These rules were originally introduced without any scalar factors.
There is no unique way of making them exact: the complex phase could be put on one side of the equation, or the other, or even split between the two.
We choose the convention of putting the complex phase on the right-hand side, resulting in the following rules:
\begin{mitem}
 \item the $\pi$-commutation law:
  \begin{equation}\label{rr:pi-comm}
   \begin{tikzpicture}
	\begin{pgfonlayer}{nodelayer}
		\node [label={[rphase]right:$\alpha$}, style=rn] (0) at (-2.25, -0.5) {};
		\node [style=none] (1) at (-2.25, 1) {};
		\node [style=none] (2) at (-2.25, -1) {};
		\node [style=none] (3) at (0, -0) {$=$};
		\node [label={[gphase]right:$\pi$}, style=gn] (4) at (-2.25, 0.5) {};
		\node [style=rn, label={[rphase]right:$-\alpha$}] (5) at (3.25, 0.5) {};
		\node [style=none] (6) at (3.25, 1) {};
		\node [style=none] (7) at (3.25, -1) {};
		\node [style=gn, label={[gphase]right:$\pi$}] (8) at (3.25, -0.5) {};
		\node [style=gn] (9) at (-3.25, 0.5) {};
		\node [style=rn] (10) at (-3.25, -0.5) {};
		\node [label={[rphase]right:$\alpha$}, style=rn] (11) at (1.25, -0.5) {};
		\node [label={[gphase]right:$\pi$}, style=gn] (12) at (1.25, 0.5) {};
	\end{pgfonlayer}
	\begin{pgfonlayer}{edgelayer}
		\draw (0) to (2.center);
		\draw (1.center) to (0);
		\draw (5) to (7.center);
		\draw (6.center) to (5);
		\draw (9) to (10);
		\draw (12) to (11);
	\end{pgfonlayer}
\end{tikzpicture},
  \end{equation}
 \item and the Euler decomposition of the Hadamard:
  \begin{equation}\label{rr:Euler}
   \begin{tikzpicture}
	\begin{pgfonlayer}{nodelayer}
		\node [style=none] (0) at (-1.25, 0.75) {};
		\node [style=none] (1) at (4.25, 1.75) {};
		\node [style=gn, label={[gphase]right:$\pi/2$}] (2) at (4.25, 1.25) {};
		\node [style=Hadamard] (3) at (-1.25, -0) {};
		\node [style=none] (4) at (0, -0) {$=$};
		\node [style=rn, label={[rphase]right:$\pi/2$}] (5) at (4.25, -0) {};
		\node [style=gn, label={[gphase]right:$\pi/2$}] (6) at (4.25, -1.25) {};
		\node [style=none] (7) at (-1.25, -0.75) {};
		\node [style=none] (8) at (4.25, -1.75) {};
		\node [style=gn] (9) at (-2.25, 0.5) {};
		\node [style=gn] (10) at (-3, 0.5) {};
		\node [style=rn] (11) at (-2.25, -0.5) {};
		\node [style=rn] (12) at (-3, -0.5) {};
		\node [style=gn, label={[gphase]right:$-\pi/2$}] (13) at (1, 0.625) {};
		\node [style=rn, label={[rphase]right:$-\pi/2$}] (14) at (1, -0.625) {};
	\end{pgfonlayer}
	\begin{pgfonlayer}{edgelayer}
		\draw (2) to (5);
		\draw (5) to (6);
		\draw (0.center) to (3);
		\draw (6) to (8.center);
		\draw (1.center) to (2);
		\draw (3) to (7.center);
		\draw (10) to (12);
		\draw (9) to (11);
		\draw (13) to (14);
	\end{pgfonlayer}
\end{tikzpicture}.
  \end{equation}
\end{mitem}
Additional real scalars have been added to the left-hand side of the equations to preserve the normalisation, cf. \eqref{eq:scalars}.

The other rules listed in section \ref{s:rewrite_rules} are correctly scaled in the form given there.

Both corollaries of the original stabilizer completeness proof continue to hold under the scaled rewrite rules, unless keeping track of scalars stops us from applying a rewrite rule whose unscaled counterpart does apply.
\begin{ex}
 Consider the following diagram:
 \begin{equation}\label{eq:copy_example}
  \begin{tikzpicture}
	\begin{pgfonlayer}{nodelayer}
		\node [style=none] (0) at (-1.75, 0.75) {};
		\node [style=rn] (1) at (-1.25, -0) {};
		\node [style=none] (2) at (-0.75, 0.75) {};
		\node [style=gn] (3) at (-1.25, -0.75) {};
	\end{pgfonlayer}
	\begin{pgfonlayer}{edgelayer}
		\draw [bend right, looseness=1.00] (0.center) to (1);
		\draw (1) to (3.center);
		\draw [bend left, looseness=1.00] (2.center) to (1);
	\end{pgfonlayer}
\end{tikzpicture}\;.
 \end{equation}
 Using the scalar-free rewrite rules, this can be rewritten to \state{gn} \state{gn}. Yet using the scaled rewrite rules, the diagram cannot be rewritten, as the copy rule now requires a factor of \innerprodgr.
\end{ex}
If all scalars are invertible, this problem disappears: given an inverse for \innerprodgr, we could just add both that scalar and the inverse to the diagram in \eqref{eq:copy_example} and then apply the copy rule, as in the following example.

\begin{ex}
 Suppose the scalar diagram \genscalar{$s$} is the inverse of \innerprodgr. Graphically, this is expressed as follows:
 \begin{equation}\label{eq:innerprodgr_inverse}
  \genscalar{$s$} \; \innerprodgr \; = \quad ,
 \end{equation}
 since the empty diagram represents the scalar identity: $\llbracket\;\;\rrbracket = 1$.
 Using \eqref{eq:innerprodgr_inverse} and the fact that any \ZX-calculus diagram contains the empty diagram as a subdiagram, we can derive:
 \begin{equation}
  \begin{tikzpicture}
	\begin{pgfonlayer}{nodelayer}
		\node [style=none] (0) at (-1.75, 0.75) {};
		\node [style=rn] (1) at (-1.25, -0) {};
		\node [style=none] (2) at (-0.75, 0.75) {};
		\node [style=gn] (3) at (-1.25, -0.75) {};
	\end{pgfonlayer}
	\begin{pgfonlayer}{edgelayer}
		\draw [bend right, looseness=1.00] (0.center) to (1);
		\draw (1) to (3.center);
		\draw [bend left, looseness=1.00] (2.center) to (1);
	\end{pgfonlayer}
\end{tikzpicture} \;
  = \; \genscalar{$s$} \; \innerprodgr \; \begin{tikzpicture}
	\begin{pgfonlayer}{nodelayer}
		\node [style=none] (0) at (-1.75, 0.75) {};
		\node [style=rn] (1) at (-1.25, -0) {};
		\node [style=none] (2) at (-0.75, 0.75) {};
		\node [style=gn] (3) at (-1.25, -0.75) {};
	\end{pgfonlayer}
	\begin{pgfonlayer}{edgelayer}
		\draw [bend right, looseness=1.00] (0.center) to (1);
		\draw (1) to (3.center);
		\draw [bend left, looseness=1.00] (2.center) to (1);
	\end{pgfonlayer}
\end{tikzpicture} \;
  = \; \genscalar{$s$} \; \state{gn} \; \state{gn}.
 \end{equation}
 The non-scalar part of the diagram has thus been transformed in the desired way.
\end{ex}

As seen in section \ref{s:normalisation}, no simple \ZX-calculus diagram built from just a few of the existing nodes can be the inverse to \innerprodgr.
We will therefore introduce a new node type.
To keep the number of rewrite rules small, it will be more convenient to define this new node as the inverse of \scalar{gn} rather than \innerprodgr; the inverse for the latter can then be derived.
\begin{dfn}\label{dfn:halfscalar}
 Let \halfscalar{} be a new node type called \emph{star node} with no inputs or outputs. The star node is defined as the inverse of \scalar{gn}, i.e.\ it obeys the following rewrite rule, called \emph{star rule}:
 \begin{equation}
   \halfscalar \; \scalar{gn} \; = \quad .
 \end{equation}
\end{dfn}

\begin{lem}\label{lem:halfscalar_interpretation}
 Definition \ref{dfn:halfscalar} is consistent with the interpretation $\left\llbracket \halfscalar \right\rrbracket = \frac{1}{2}$.
\end{lem}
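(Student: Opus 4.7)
The plan is to verify that the assignment $\llbracket \halfscalar \rrbracket = \tfrac{1}{2}$ makes the defining star rule of Definition \ref{dfn:halfscalar} hold as an equation in $\CC$, and that this assignment is in fact forced, so no inconsistency can arise from adding the star node to the calculus.

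First I would apply the interpretation functor $\llbracket \cdot \rrbracket$ to both sides of the star rule. Since juxtaposition of diagrams corresponds to the tensor product, and the tensor product of two scalars in $\CC$ is just their ordinary product, the left-hand side becomes $\llbracket \halfscalar \rrbracket \cdot \llbracket \scalar{gn} \rrbracket$. The right-hand side is the empty diagram, whose interpretation is the scalar $1$ (consistent with the convention, stated after the interpretation rules, that the zero-fold tensor product of a normalised bra or ket is $1$).

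Next I would compute $\llbracket \scalar{gn} \rrbracket$ directly from the formula for the green node with $m = n = 0$ and $\alpha = 0$: it evaluates to $\ket{0}\t{0}\bra{0}\t{0} + e^{i\cdot 0}\ket{1}\t{0}\bra{1}\t{0} = 1 + 1 = 2$. Substituting into the image of the star rule yields the equation $\llbracket \halfscalar \rrbracket \cdot 2 = 1$ in $\CC$, which has the unique solution $\llbracket \halfscalar \rrbracket = \tfrac{1}{2}$.

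There is really no serious obstacle: the lemma is a one-line consistency check, hinging only on the fact that $\llbracket \scalar{gn} \rrbracket = 2$ is a nonzero element of $\CC$ and therefore has a unique multiplicative inverse. The only thing worth spelling out is that this nonzeroness is what guarantees both existence and uniqueness of a compatible interpretation for the new node, so that extending $\llbracket \cdot \rrbracket$ by $\llbracket \halfscalar \rrbracket = \tfrac{1}{2}$ produces a well-defined interpretation of the enlarged calculus satisfying the star rule.
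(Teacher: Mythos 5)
Your proof is correct and matches the paper's (implicit) reasoning: the paper omits a written proof of this lemma precisely because it reduces to the computation $\left\llbracket \scalar{gn} \right\rrbracket = 2$ already carried out in Section \ref{s:normalisation}, after which $\left\llbracket \halfscalar \right\rrbracket \cdot 2 = 1$ forces the value $\tfrac{1}{2}$. Your additional remark that nonzeroness of $2$ guarantees existence and uniqueness of the extension is a nice touch but does not change the argument.
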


\begin{lem}
 The star node \halfscalar{} is also inverse to \innerprodgr{} \innerprodgr{} :
 \begin{equation}\label{eq:halfscalar_innerprodgr2}
  \halfscalar \; \innerprodgr \; \innerprodgr \; = \quad .
 \end{equation}
\end{lem}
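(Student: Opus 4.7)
The plan is to first prove the intermediate identity $\innerprodgr \; \innerprodgr \;=\; \scalar{gn}$, from which the lemma follows by one application of the star rule.

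To establish this intermediate identity, I close the scaled Hopf law~\eqref{rr:Hopf} by composing both sides with $\effect{gn}$ on the single top output and $\state{rn}$ on the single bottom input. On the left-hand side of Hopf, the spider rule fuses $\effect{gn}$ with the top green spider of the double bond into a green spider with two outputs and no inputs, and fuses $\state{rn}$ with the bottom red spider into a red spider with two inputs and no outputs; the scalar factor $\innerprodgr \; \innerprodgr$ is unaffected. Hence the closed left-hand side is a ``theta'' scalar --- a green spider and a red spider joined by two parallel edges --- multiplied by $\innerprodgr \; \innerprodgr$. On the right-hand side, $\effect{gn}$ fuses with $\state{gn}$ into $\innerprod{gn}{gn}$ and $\state{rn}$ fuses with $\effect{rn}$ into $\innerprod{rn}{rn}$; the spider rule collapses each of these to a bare scalar node, and the colour-change rule (vacuous with zero edges) identifies $\scalar{rn}$ with $\scalar{gn}$. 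The closed right-hand side is therefore $\scalar{gn} \; \scalar{gn}$, so the closed Hopf identity reads $\text{(theta)} \; \innerprodgr \; \innerprodgr \;=\; \scalar{gn} \; \scalar{gn}$.

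I then evaluate the theta scalar on its own. By the colour-swapped, flipped version of the cup rule the red spider with two inputs and no outputs equals a bare cap, so the two parallel edges of theta become a single self-loop on the green spider; the loop rule removes that self-loop, leaving the bare scalar node $\scalar{gn}$. Substituting $\text{theta} \;=\; \scalar{gn}$ into the closed Hopf identity gives $\scalar{gn} \; \innerprodgr \; \innerprodgr \;=\; \scalar{gn} \; \scalar{gn}$. Multiplying both sides by $\halfscalar$ and applying the star rule cancels one $\halfscalar \; \scalar{gn}$ pair on each side, so $\innerprodgr \; \innerprodgr \;=\; \scalar{gn}$. One further multiplication by $\halfscalar$, followed by one more use of the star rule, yields $\halfscalar \; \innerprodgr \; \innerprodgr \;=\; \halfscalar \; \scalar{gn}$, and the right-hand side equals the empty diagram by the star rule.

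The main obstacle is careful bookkeeping of scalars: every rewrite must be an explicitly scaled rule, so the extra $\innerprodgr$ factor produced by the scaled Hopf law has to be tracked and ultimately cancelled using the new star rule rather than by any ``ignore all non-zero scalars'' meta-rule. A secondary technicality is that closure and simplification of theta produce multi-edge configurations that need the loop rule to clean up before a single bare scalar node appears; these manipulations are routine once the scalar bookkeeping is in order.
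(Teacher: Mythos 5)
Your proof is correct and follows essentially the same route as the paper's: both reduce the claim to the identity $\innerprodgr\;\innerprodgr\;=\;\scalar{gn}$ via the scaled Hopf law applied to a closed double bond (the ``theta'' scalar), identified with \scalar{gn} through the cup and loop rules, with the star rule handling the final cancellation. The only difference is organisational --- you close off the Hopf law and then cancel \scalar{gn} from both sides using \halfscalar{}, whereas the paper runs a single forward rewrite chain starting from $\innerprodgr\;\innerprodgr$ by inserting $\halfscalar\;\scalar{gn}$ and unfolding \scalar{gn} into the theta graph.
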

\begin{proof}
 Using the star rule, loop rule, cup rule, spider rule, and Hopf rule, we have:
 \begin{equation}
  \innerprodgr \; \innerprodgr \;
  = \; \halfscalar \; \innerprodgr \; \innerprodgr \; \scalar{gn} \;
  = \; \halfscalar \; \innerprodgr \; \innerprodgr \; \begin{tikzpicture}
	\begin{pgfonlayer}{nodelayer}
		\node [style=none] (0) at (3.25, -0.5) {};
		\node [style=none] (1) at (3.75, -0) {};
		\node [style=none] (2) at (2.75, -0) {};
		\node [style=gn] (3) at (3.25, 0.5) {};
	\end{pgfonlayer}
	\begin{pgfonlayer}{edgelayer}
		\draw [bend right=45, looseness=1.00] (3) to (2.center);
		\draw [bend right=45, looseness=1.00] (2.center) to (0.center);
		\draw [bend left=45, looseness=1.00] (3) to (1.center);
		\draw [bend left=45, looseness=1.00] (1.center) to (0.center);
	\end{pgfonlayer}
\end{tikzpicture} \;
  = \; \halfscalar \; \innerprodgr \; \innerprodgr \; \begin{tikzpicture}
	\begin{pgfonlayer}{nodelayer}
		\node [style=rn] (0) at (3.25, -0.5) {};
		\node [style=none] (1) at (3.75, -0) {};
		\node [style=none] (2) at (2.75, -0) {};
		\node [style=gn] (3) at (3.25, 0.5) {};
	\end{pgfonlayer}
	\begin{pgfonlayer}{edgelayer}
		\draw [bend right=45, looseness=1.00] (3) to (2.center);
		\draw [bend right=45, looseness=1.00] (2.center) to (0);
		\draw [bend left=45, looseness=1.00] (3) to (1.center);
		\draw [bend left=45, looseness=1.00] (1.center) to (0);
	\end{pgfonlayer}
\end{tikzpicture} \;
  =  \; \halfscalar \; \innerprodgr \; \innerprodgr \; \begin{tikzpicture}
	\begin{pgfonlayer}{nodelayer}
		\node [style=rn] (0) at (3.25, -0.75) {};
		\node [style=gn] (1) at (3.25, 1.5) {};
		\node [style=rn] (2) at (3.25, -1.5) {};
		\node [style=none] (3) at (3.75, -0) {};
		\node [style=none] (4) at (2.75, -0) {};
		\node [style=gn] (5) at (3.25, 0.75) {};
	\end{pgfonlayer}
	\begin{pgfonlayer}{edgelayer}
		\draw (1.center) to (5);
		\draw [bend right, looseness=1.00] (5) to (4.center);
		\draw [bend right, looseness=1.00] (4.center) to (0);
		\draw (0) to (2.center);
		\draw [bend left, looseness=1.00] (5) to (3.center);
		\draw [bend left, looseness=1.00] (3.center) to (0);
	\end{pgfonlayer}
\end{tikzpicture} \;
  =  \; \halfscalar \; \begin{tikzpicture}
	\begin{pgfonlayer}{nodelayer}
		\node [style=gn] (0) at (-1, 0.5) {};
		\node [style=gn] (1) at (-1, 1.25) {};
		\node [style=rn] (2) at (-1, -0.5) {};
		\node [style=rn] (3) at (-1, -1.25) {};
	\end{pgfonlayer}
	\begin{pgfonlayer}{edgelayer}
		\draw (1) to (0);
		\draw (2) to (3);
	\end{pgfonlayer}
\end{tikzpicture} \;
  = \; \halfscalar \; \scalar{gn} \; \scalar{rn} \;
  = \; \scalar{gn}.
 \end{equation}
 Thus the desired equality follows directly from the star rule.
\end{proof}

With \halfscalar{} and \eqref{eq:halfscalar_innerprodgr2}, application of the scaled bialgebra, copy, and Hopf rules can never fail because of normalisation problems.
In the following section, we show that, given the star node and star rule, the scalars appearing in the $\pi$-commutation and Euler rules also have inverses.
Thus the completeness proof for non-scalar diagrams and its corollaries still hold if we replace the original rewrite rules with the scaled versions and add \halfscalar{} and the star rule to the calculus.

\section{The completeness results}
\label{s:completeness}

\subsection{A normal form for non-zero stabilizer scalars}

In this section, whenever we talk about scalars, we mean non-zero stabilizer scalars.
Using Corollary \ref{cor:decompose_scalars} and Lemma \ref{lem:halfscalar_interpretation}, it is straightforward to show the following.
\begin{prop}
 Let $D$ be a non-zero stabilizer scalar diagram.
 Then:
 \begin{equation}\label{eq:scalar_values}
  \left\llbracket D \right\rrbracket \in \left\{ \sqrt{2^r}e^{is\pi/4} \right|\left.\vphantom{e^{is\pi/4}\sqrt{2^r}} r,s\in\ZZ\right\}.
 \end{equation}
\end{prop}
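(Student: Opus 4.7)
The plan is to apply Corollary \ref{cor:decompose_scalars} to rewrite $D$ into a disjoint union of segments, each containing at most two nodes (possibly including star nodes, since the corollary continues to hold in the scaled calculus with the star rule). Because the interpretation of a disjoint union of scalar diagrams is the product of the individual interpretations, and because the set in \eqref{eq:scalar_values} is closed under multiplication --- indeed, $\sqrt{2^{r_1}}e^{is_1\pi/4}\cdot\sqrt{2^{r_2}}e^{is_2\pi/4}=\sqrt{2^{r_1+r_2}}e^{i(s_1+s_2)\pi/4}$ --- it suffices to check that every possible non-zero segment value lies in this set.

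Next I would case-split on the possible segments. A lone star node contributes $\tfrac12 = \sqrt{2^{-2}}\cdot e^{i\cdot 0}$ by Lemma \ref{lem:halfscalar_interpretation}. A single coloured node with phase $\alpha\in\{0,\pm\pi/2,\pi\}$ has interpretation $1+e^{i\alpha}\in\{0,\,2,\,1+i,\,1-i\}$, whose non-zero values $2=\sqrt{2^2}$ and $1\pm i = \sqrt{2}\,e^{\pm i\pi/4}$ are all of the required form. Any connected two-node segment whose two nodes share a colour reduces, via the spider and loop rules, to a single-node case already handled. The remaining configurations are connected two-node segments made from one red and one green node joined by an edge, with phases $\alpha,\beta\in\{0,\pm\pi/2,\pi\}$; this is a finite list of sixteen inner products whose values can be computed directly from the interpretations of red and green nodes, as illustrated by the representative entries of \eqref{eq:scalars}, and each lies either in \eqref{eq:scalar_values} or is zero.

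Since $\llbracket D \rrbracket$ is non-zero by hypothesis, no factor in the decomposition can vanish, so $\llbracket D \rrbracket$ is a finite product of values from \eqref{eq:scalar_values}, which by closure is itself of that form. I do not expect any deep obstacle here: the statement is essentially a finite semantic calculation, once Corollary \ref{cor:decompose_scalars} and Lemma \ref{lem:halfscalar_interpretation} are in hand. The point demanding care is being thorough in the enumeration --- in particular, node-free loop segments such as the one evaluated in Section \ref{s:normalisation} to $2$, self-loops, and caps all need to be traced back to the cases above via the cup, loop, and spider rules before the product structure can be applied.
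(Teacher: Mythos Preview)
Your proposal is correct and follows essentially the same approach as the paper, which simply asserts that the proposition is straightforward from Corollary~\ref{cor:decompose_scalars} and Lemma~\ref{lem:halfscalar_interpretation}; you have carefully spelled out the finite case analysis that the paper leaves implicit.
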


We will define a normal form for scalar diagrams as follows: Pick one representative diagram for each $s\in\{0,1,\ldots,7\}$ from \eqref{eq:scalar_values}.
Combine these with the smallest number of copies of \halfscalar{} and/or \innerprodgr{} required to achieve the correct normalisation.

The simplest representative for $s=0$ is the empty diagram.

\begin{lem}
 The set
 \begin{equation}\label{eq:complex_phases}
  \left\{ \innerprod{gn,label={[gphase]right:$\pi/2$}}{rn,label={[rphase]right:$\pi/2$}}, \quad \innerprod{gn,label={[gphase]right:$\pi/2$}}{rn,label={[rphase]right:$\pi$}}, \quad \innerprod{gn,label={[gphase]right:$\pi/2$}}{rn,label={[rphase]right:$\pi$}} \innerprod{gn,label={[gphase]right:$\pi/2$}}{rn,label={[rphase]right:$\pi/2$}}, \quad \innerprod{gn,label={[gphase]right:$\pi$}}{rn,label={[rphase]right:$\pi$}}, \quad \innerprod{gn,label={[gphase]right:$-\pi/2$}}{rn,label={[rphase]right:$\pi$}} \innerprod{gn,label={[gphase]right:$-\pi/2$}}{rn,label={[rphase]right:$-\pi/2$}}, \quad \innerprod{gn,label={[gphase]right:$-\pi/2$}}{rn,label={[rphase]right:$\pi$}}, \quad \innerprod{gn,label={[gphase]right:$-\pi/2$}}{rn,label={[rphase]right:$-\pi/2$}} \right\}
 \end{equation}
 contains one diagram for each complex phase $e^{is\pi/4}$ with $s\in\{1,\ldots,7\}$.
\end{lem}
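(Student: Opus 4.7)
The plan is a direct computation. Using the stated interpretations of the basic green and red nodes (with $m=1$, $n=0$ for the green state; $l=0$, $k=1$ for the red effect), each single inner product in the list reduces to
\[
 \left\llbracket \innerprod{gn, label={[gphase]right:$\alpha$}}{rn, label={[rphase]right:$\beta$}} \right\rrbracket
 = \bigl(\bra{+} + e^{i\beta}\bra{-}\bigr)\bigl(\ket{0} + e^{i\alpha}\ket{1}\bigr)
 = \tfrac{1}{\sqrt{2}}\bigl(1 + e^{i\alpha} + e^{i\beta} - e^{i(\alpha+\beta)}\bigr),
\]
using $\braket{\pm}{0}=1/\sqrt{2}$ and $\braket{\pm}{1}=\pm 1/\sqrt{2}$. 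I would then substitute the five phase pairs $(\alpha,\beta)$ occurring as single-inner-product entries of the list and simplify.

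Plugging in $(\pi/2,\pi/2)$, $(\pi/2,\pi)$, $(\pi,\pi)$, $(-\pi/2,\pi)$ and $(-\pi/2,-\pi/2)$ gives $2e^{i\pi/4}$, $\sqrt{2}\,e^{i\pi/2}$, $\sqrt{2}\,e^{i\pi}$, $\sqrt{2}\,e^{i3\pi/2}$ and $2e^{i7\pi/4}$ respectively, so these diagrams cover the phase indices $s\in\{1,2,4,6,7\}$. For the two product diagrams I use the fact that the interpretation of a disconnected scalar diagram is the product of the interpretations of its components: the third entry then evaluates to $\sqrt{2}\,e^{i\pi/2}\cdot 2e^{i\pi/4}=2\sqrt{2}\,e^{i3\pi/4}$, giving $s=3$, and the fifth entry to $\sqrt{2}\,e^{i3\pi/2}\cdot 2e^{i7\pi/4}=2\sqrt{2}\,e^{i5\pi/4}$, giving $s=5$. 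Together, all seven values $s\in\{1,\ldots,7\}$ occur, each realised by exactly one diagram in the list.

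The proof is essentially mechanical and there is no conceptual obstacle; the only care required is bookkeeping of the complex arithmetic, in particular reducing sums such as $3\pi/2 + 7\pi/4 = 13\pi/4$ modulo $2\pi$ to $5\pi/4$ when combining the two components of the ``product'' entries.
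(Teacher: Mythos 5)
Your computation is correct and matches the paper's proof, which simply says to apply the interpretation map to each diagram in turn and ignore normalisation; you have just carried out that calculation explicitly. All seven phase values check out, including the modular reduction for the two product diagrams.
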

\begin{proof}
 Apply the interpretation map to each diagram in turn, ignoring normalisation.
\end{proof}

\begin{lem}\label{lem:modulus_nf}
 Let $r$ be an integer.
 Then the scalar $\sqrt{2^r}$ can be represented by one of the following \ZX-calculus diagrams: 
 \begin{mitem}
  \item for $r>0$, $r$ copies of \innerprodgr,
  \item for $r=0$, the empty diagram,
  \item for $r<0$ and $r$ even, $\abs{r}/2$ copies of \halfscalar, and
  \item for $r<0$ and $r$ odd, one copy of \innerprodgr{} and $(1-r)/2$ copies of \halfscalar.
 \end{mitem}
\end{lem}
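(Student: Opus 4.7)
The proof is essentially a case-by-case verification using the interpretations already established: $\llbracket\,\innerprodgr\,\rrbracket = \sqrt{2}$ from equation \eqref{eq:scalars}, and $\llbracket\,\halfscalar\,\rrbracket = \tfrac{1}{2}$ from Lemma \ref{lem:halfscalar_interpretation}. Since juxtaposition of scalar diagrams corresponds to multiplication of their scalar values, the plan is simply to compute $\llbracket\cdot\rrbracket$ of the proposed diagram in each of the four cases and check that it equals $\sqrt{2^r}$.

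Concretely, I would proceed case by case. For $r>0$, the diagram consisting of $r$ disjoint copies of $\innerprodgr$ has interpretation $(\sqrt{2})^r = \sqrt{2^r}$. For $r=0$, the empty diagram evaluates to $1 = \sqrt{2^0}$ by convention. For $r<0$ even, writing $r=-2k$ with $k\geq 1$, the $k$ copies of $\halfscalar$ give $(1/2)^k = 2^{-k} = \sqrt{2^{-2k}} = \sqrt{2^r}$. For $r<0$ odd, write $k = (1-r)/2 \in \mathbb{Z}_{\geq 1}$; then one copy of $\innerprodgr$ together with $k$ copies of $\halfscalar$ evaluates to
\begin{equation*}
\sqrt{2}\cdot\left(\tfrac{1}{2}\right)^{(1-r)/2} = 2^{1/2}\cdot 2^{-(1-r)/2} = 2^{r/2} = \sqrt{2^r}.
\end{equation*}

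There is no real obstacle here: the only thing to check carefully is the arithmetic in the last case, where the choice of exponent $(1-r)/2$ is precisely what makes the contribution of the single $\innerprodgr$ combine with the stars to produce $\sqrt{2^r}$ for negative odd $r$. In particular, $(1-r)/2$ is a positive integer exactly when $r$ is a negative odd integer, so the prescription is well-defined. Since all four cases exhaust the possibilities for $r\in\ZZ$, the lemma follows.
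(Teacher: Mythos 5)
Your proposal is correct and takes the same approach as the paper, which simply states that ``it is straightforward to check the interpretations of the given diagrams''; you have carried out that check explicitly and the arithmetic in each case is right.
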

\begin{proof}
 It is straightforward to check the interpretations of the given diagrams.
\end{proof}

In fact, any diagram consisting solely of copies of \innerprodgr{} and \halfscalar{} can be brought into the normal form given in Lemma \ref{lem:modulus_nf} by applying \eqref{eq:halfscalar_innerprodgr2} as many times as possible.

We will now state and prove the normal form theorem for non-zero scalars.
For ease of understanding, some parts of the proof are given as separate lemmas, which are stated after the main theorem.
The proofs of those lemmas, as well as a full proof of the main theorem, can be found in Appendix \ref{s:appendix}.

\begin{thm}\label{thm:scalar_nf}
 Any non-zero stabilizer scalar diagram in the \ZX-calculus can be uniquely represented as a combination of one of the diagrams in \eqref{eq:complex_phases} or the empty diagram with one of the diagrams listed in Lemma \ref{lem:modulus_nf}.
\end{thm}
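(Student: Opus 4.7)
The plan is to prove existence and uniqueness of the normal form separately. For existence, I start by applying Corollary \ref{cor:decompose_scalars} to decompose the diagram $D$ into disconnected segments each containing at most two nodes, and then reduce each segment independently into a product of a phase representative and copies of \innerprodgr{} and \halfscalar{}.

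After trivial simplifications using the spider, loop, and cup rules, the possible segments are of only a few shapes: isolated single-coloured nodes with phase $\alpha\in\{0,\pi/2,\pi,-\pi/2\}$, and two-node inner products of the form $\innerprod{gn,label={[gphase]right:$\alpha$}}{rn,label={[rphase]right:$\beta$}}$ with $\alpha,\beta$ from the same set. Isolated nodes with phase $\pi$ represent the zero scalar and can be excluded since $D$ is non-zero. For each remaining segment, a finite case analysis using the scaled $\pi$-commutation rule \eqref{rr:pi-comm}, the scaled Euler rule \eqref{rr:Euler}, and the colour-change rule lets me rewrite the segment into a product of one of the diagrams of \eqref{eq:complex_phases} (or the empty diagram) with some copies of \innerprodgr{} and \halfscalar{}. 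Extra \innerprodgr{} factors introduced on the right-hand sides of \eqref{rr:pi-comm} and \eqref{rr:Euler} are simply absorbed into the modulus part. In particular, an isolated \scalar{gn} is rewritten to \innerprodgr\,\innerprodgr{} by inserting the trivial expression \halfscalar\,\innerprodgr\,\innerprodgr{} from \eqref{eq:halfscalar_innerprodgr2} and then cancelling \scalar{gn}\,\halfscalar{} via the star rule.

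With every segment in this form, the whole diagram becomes a juxtaposition of several phase representatives and many copies of \innerprodgr{} and \halfscalar{}. Collecting the phases amounts to realising $\ZZ_8$ arithmetic graphically: for each pair of phase representatives I verify that their juxtaposition can be rewritten to the single representative of the summed phase, again by spider merging and further uses of \eqref{rr:pi-comm} and \eqref{rr:Euler}. Iterating leaves a single phase representative from \eqref{eq:complex_phases} or the empty diagram. Collecting the modulus factors then reduces to repeatedly applying \eqref{eq:halfscalar_innerprodgr2}, which leaves exactly one of the four shapes listed in Lemma \ref{lem:modulus_nf}.

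Uniqueness follows from soundness plus injectivity of the interpretation map on the candidate normal forms: the proposition preceding the theorem together with Lemma \ref{lem:halfscalar_interpretation} shows that each normal form produces a distinct complex number $\sqrt{2^r}e^{is\pi/4}$, and the rewrite rules preserve $\llbracket\cdot\rrbracket$, so any two normal forms derivable from the same $D$ must be identical. The step I expect to be the main obstacle is the phase-combination bookkeeping: every shuffle of $\pm\pi/2$ phases through \eqref{rr:pi-comm} or \eqref{rr:Euler} introduces auxiliary scalar factors that must be tracked exactly, and the reduction only closes because Definition \ref{dfn:halfscalar} together with \eqref{eq:halfscalar_innerprodgr2} makes those auxiliary scalars invertible.
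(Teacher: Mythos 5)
Your proposal is correct and follows essentially the same route as the paper: decompose via Corollary \ref{cor:decompose_scalars}, reduce each one- or two-node segment to a standard green--red inner product times copies of \innerprodgr{} and \halfscalar{}, combine phase representatives pairwise (the paper does this $\ZZ_8$ bookkeeping explicitly via Lemmas \ref{lem:pi_multiplication} and \ref{lem:omega_inverses} and equations \eqref{eq:scalar_pi_2_equality}--\eqref{eq:minus_omega}), collapse the modulus part with \eqref{eq:halfscalar_innerprodgr2}, and get uniqueness from the distinct interpretations of the candidate normal forms. You correctly identify the pairwise phase-combination case analysis as the main remaining work, and that is exactly what the paper's appendix fills in.
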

\begin{proof}[Proof (sketch, full proof in Appendix \ref{s:appendix})]
 By Corollary \ref{cor:decompose_scalars} and Lemma \ref{lem:y-states}, any scalar diagram can be written in terms of star nodes and diagrams of the form
 \begin{center}
  \innerprod{gn,label={[gphase]right:$\alpha$}}{rn,label={[rphase]right:$\beta$}}
 \end{center}
 for some $\alpha,\beta\in\{0,\pi/2,\pi,-\pi/2\}$.
 Using Lemmas \ref{lem:innerprod_wlog} and \ref{lem:overlap_with_ket_zero}, any scalar diagram can be further rewritten to a combination of elements of the following set:
 \begin{equation}
  \left\{ \halfscalar, \quad \innerprod{gn}{rn}, \quad \innerprod{gn,label={[gphase]right:$\pi/2$}}{rn,label={[rphase]right:$\pi/2$}}, \quad
  \innerprod{gn,label={[gphase]right:$\pi/2$}}{rn,label={[rphase]right:$\pi$}}, \quad
  \innerprod{gn,label={[gphase]right:$\pi$}}{rn,label={[rphase]right:$\pi$}}, \quad
  \innerprod{gn,label={[gphase]right:$-\pi/2$}}{rn,label={[rphase]right:$\pi$}}, \quad
  \innerprod{gn,label={[gphase]right:$-\pi/2$}}{rn,label={[rphase]right:$-\pi/2$}} \right\}.
 \end{equation}
 Any diagram built from these components can be brought into the desired normal form via Lemmas \ref{lem:pi_multiplication} and \ref{lem:omega_inverses}.
\end{proof}

\begin{lem}\label{lem:innerprod_wlog}
 The inner product between a red and a green node with phase angles $\alpha$ and $\beta$ is defined only by the set $\{\alpha,\beta\}$, it does not matter which label is assigned to the green and which to the red node, or whether it is a green state and red effect, or conversely. Diagrammatically:
 \begin{equation}
  \innerprod{gn, label={[gphase]right:$\alpha$}}{rn, label={[rphase]right:$\beta$}} \; = \; \innerprod{rn, label={[rphase]right:$\beta$}}{gn, label={[gphase]right:$\alpha$}} \; = \; \innerprod{gn, label={[gphase]right:$\beta$}}{rn, label={[rphase]right:$\alpha$}} \; = \; \innerprod{rn, label={[rphase]right:$\alpha$}}{gn, label={[gphase]right:$\beta$}}.
 \end{equation}
\end{lem}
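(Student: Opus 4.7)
The plan is to treat the four diagrams as two pairs related by a $180^\circ$ rotation of the plane, and to connect the two pairs by a single colour-change argument.

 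For the two ``flip'' equalities, $\innerprod{gn, label={[gphase]right:$\alpha$}}{rn, label={[rphase]right:$\beta$}} = \innerprod{rn, label={[rphase]right:$\beta$}}{gn, label={[gphase]right:$\alpha$}}$ and $\innerprod{gn, label={[gphase]right:$\beta$}}{rn, label={[rphase]right:$\alpha$}} = \innerprod{rn, label={[rphase]right:$\alpha$}}{gn, label={[gphase]right:$\beta$}}$, the two sides differ only by a $180^\circ$ rotation of the plane. Since a scalar diagram has no external wires, there is no preferred ``up'' direction, and so the meta-rule ``only the topology matters'' gives both equalities immediately.

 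To bridge the pairs I would prove $\innerprod{gn, label={[gphase]right:$\alpha$}}{rn, label={[rphase]right:$\beta$}} = \innerprod{gn, label={[gphase]right:$\beta$}}{rn, label={[rphase]right:$\alpha$}}$ in three steps. First, apply the colour change rule to the red state of phase $\beta$: this replaces it by a green state of phase $\beta$ with a Hadamard inserted on its single edge, turning the scalar into a vertical chain consisting of a green node with phase $\alpha$, a Hadamard box, and a green node with phase $\beta$. Second, flip this chain upside-down via topology; the Hadamard box is invariant under this flip, as witnessed by its symmetric Euler decomposition~\eqref{rr:Euler} into three $\pi/2$ spiders of alternating colour. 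The resulting chain has the green node of phase $\beta$ at the top and the green node of phase $\alpha$ at the bottom, still joined by the central Hadamard. Third, apply the colour change rule in reverse at the bottom node, absorbing the Hadamard and converting the green state of phase $\alpha$ back into a red state of phase $\alpha$. This yields the desired right-hand side.

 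Chaining the three equalities produces the full four-way statement. The only conceptually delicate step is invoking topology to flip a diagram that contains a Hadamard box, and I expect this to be the main point requiring care; however, the symmetry of the Euler decomposition settles it cleanly, so the rest is routine bookkeeping and no new lemmas are needed.
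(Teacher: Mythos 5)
Your proof is correct and follows the same overall strategy as the paper's: the two ``vertical flip'' equalities are dispatched by the topology meta-rule, and the remaining equality is obtained from the colour change rule. The one place you diverge is in how that middle equality is executed. The paper applies the colour change rule to \emph{both} nodes of \innerprod{gn, label={[gphase]right:$\alpha$}}{rn, label={[rphase]right:$\beta$}} at once, so that two Hadamard nodes appear on the single connecting edge and cancel because \Hadamard{} is self-inverse (a derived rule the paper has already flagged); this lands directly on \innerprod{rn, label={[rphase]right:$\alpha$}}{gn, label={[gphase]right:$\beta$}} and never requires flipping a diagram that contains a Hadamard node. You instead colour-change only one node and then rotate a chain containing a single Hadamard, which obliges you to establish that the Hadamard node is invariant under the flip. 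Your justification via the palindromic Euler decomposition is sound, but note that in the scaled calculus the Euler rule \eqref{rr:Euler} carries scalar factors on both sides, so upgrading ``the decomposition reads the same upside-down'' to ``\Hadamard{} equals its flip'' requires cancelling the factor \innerprodgr{} \innerprodgr{}, i.e.\ an appeal to the star rule via \eqref{eq:halfscalar_innerprodgr2}. The two-sided colour change avoids that extra bookkeeping entirely, which is why it is the marginally cleaner route; both arguments are valid.
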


\begin{lem}\label{lem:pi_multiplication}
 For any pair of phase angles $\alpha$ and $\beta$, the complex phases resulting from the inner product of \state{rn, label={[rphase]right:$\pi$}} with phased green effects can be combined into just one subdiagram and a normalising factor \innerprodgr:
 \begin{equation}
  \innerprod{gn, label={[gphase]right:$\alpha$}}{rn, label={[rphase]right:$\pi$}} \; \innerprod{gn, label={[gphase]right:$\beta$}}{rn, label={[rphase]right:$\pi$}} \; = \; \innerprod{gn}{rn} \; \innerprod{gn, label={[gphase]right:$\alpha+\beta$}}{rn, label={[rphase]right:$\pi$}}.
 \end{equation}
\end{lem}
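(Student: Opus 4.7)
The plan is to start from the right-hand side and transform it into the left-hand side by repeated use of the spider rule and the scaled $\pi$-commutation rule (eq.~\ref{rr:pi-comm}). First I would decompose the diagram $\innerprod{gn, label={[gphase]right:$\alpha+\beta$}}{rn, label={[rphase]right:$\pi$}}$ by applying the spider rule in reverse: the green $\alpha+\beta$ state splits as a green $\alpha$ state connected to a green $\beta$ phase shift, and the red $\pi$ effect splits as a red $\pi$ phase shift connected to a red $0$ effect. This produces a four-node chain ending in a red $0$ effect, with a green $\beta$ above red $\pi$ sub-pattern that is exactly the configuration of the $\pi$-commutation rule.

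Next I would apply the colour-swapped and upside-down version of the $\pi$-commutation rule to that sub-pattern, swapping the two phase shifts and introducing a scalar factor of $\innerprod{gn, label={[gphase]right:$\pi$}}{rn, label={[rphase]right:$\beta$}}/\innerprod{gn}{rn}$. Then I would simplify the resulting (green $-\beta$ phase shift) -- (red $0$ effect) sub-chain back to just (red $0$ effect) via a derived identity, and recombine (red $\pi$ phase shift) -- (red $0$ effect) into (red $\pi$ effect) using the spider rule. The remaining diagram is a green $\alpha$ state connected to a red $\pi$ effect, which equals $\innerprod{gn, label={[gphase]right:$\alpha$}}{rn, label={[rphase]right:$\pi$}}$. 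Combining the accumulated scalars with the outer $\innerprod{gn}{rn}$ factor of the RHS gives $\innerprod{gn, label={[gphase]right:$\pi$}}{rn, label={[rphase]right:$\beta$}} \cdot \innerprod{gn, label={[gphase]right:$\alpha$}}{rn, label={[rphase]right:$\pi$}}$, which by Lemma~\ref{lem:innerprod_wlog} equals the left-hand side.

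The hardest step is the derived identity (green $\gamma$ phase shift) -- (red $0$ effect) $=$ (red $0$ effect) for every stabiliser angle $\gamma$. For $\gamma = 0$ it is immediate, and for $\gamma = \pi$ it is a direct specialisation of the $\pi$-copy rule (viewing the red $0$ effect as a $1$-input $0$-output spider). For $\gamma = \pm\pi/2$ it requires a separate derivation, plausibly by combining the colour change rule with the Euler decomposition (eq.~\ref{rr:Euler}), or else as a preliminary lemma covering all stabiliser phases uniformly.
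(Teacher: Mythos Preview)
Your route via the $\pi$-commutation rule is genuinely different from the paper's, and the gap you flag yourself is real and, as far as I can see, fatal in the paper's logical order.

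The paper proves this lemma in one line using only the spider rule, the $\pi$-copy rule, and the (scaled) copy rule: it ``un-copies'' the two red $\pi$ states through a green spider. Concretely, decompose each red $\pi$ state as a red $\pi$ phase shift on a red $0$ state (spider), merge the two red $0$ states into a single one through a green $1$-in--$2$-out spider (reverse of the colour-swapped copy rule; this is exactly where the \innerprodgr\ factor appears), then pull the two red $\pi$ phase shifts back through the green spider to a single red $\pi$ on the other side (reverse of the colour-swapped $\pi$-copy rule), and finally merge all the green nodes by the spider rule. This works uniformly for every $\alpha,\beta$ with no case split.

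Your argument instead hinges on the operator identity
\[
 \effect{rn}\;\circ\;\phase{gn, label={[gphase]right:$\gamma$}}\;=\;\effect{rn}
\]
for all stabilizer $\gamma$. For $\gamma\in\{0,\pi\}$ this is fine (the $\pi$ case is the $\pi$-copy rule with zero outputs). But for $\gamma=\pm\pi/2$ I do not see a derivation from the basic scaled rules that does not pass through Lemma~\ref{lem:overlap_with_ket_zero} --- and in the paper, Lemma~\ref{lem:overlap_with_ket_zero} is proved \emph{using} Lemma~\ref{lem:pi_multiplication}. Your Euler-decomposition suggestion runs into the same problem: inverting the scalar in \eqref{rr:Euler} needs Lemma~\ref{lem:omega_inverses}, which again depends on Lemma~\ref{lem:overlap_with_ket_zero} and hence on the present lemma. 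So the $\gamma=\pm\pi/2$ case of your ``derived identity'' appears to be circular here, and since $\beta=\pm\pi/2$ is needed (e.g.\ the paper later invokes this lemma with $\beta=-\alpha$ for $\alpha=\pm\pi/2$), the proposal does not go through as stated. The fix is simply to switch to the copy/$\pi$-copy argument above.
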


\begin{lem}\label{lem:overlap_with_ket_zero}
 For any phase angle $\alpha$, the inner product of a green effect of phase $\alpha$ with \state{rn} is equal to \innerprodgr{}:
 \begin{equation}
  \innerprod{gn, label={[gphase]right:$\alpha$}}{rn} \; = \; \innerprod{gn}{rn}. 
 \end{equation}
\end{lem}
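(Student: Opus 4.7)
The plan is to reduce the claim, via the spider rule, to the absorption identity $\phase{gn, label={[gphase]right:$\alpha$}} \cdot \state{rn} = \state{rn}$. Splitting $\effect{gn, label={[gphase]right:$\alpha$}}$ as $\effect{gn} \cdot \phase{gn, label={[gphase]right:$\alpha$}}$ lets the phase shift act on \state{rn} first, after which the phase-zero green effect collapses against \state{rn} to produce \innerprodgr.

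The cases $\alpha = 0$ and $\alpha = \pi$ are handled directly. For $\alpha = 0$ the phase shift is the identity wire by the spider rule. For $\alpha = \pi$, apply the $\pi$-copy rule to the red spider \state{rn}, which has zero inputs and one output: using the convention from Section \ref{s:rewrite_rules} that the rules remain valid when some leg count is zero, the green $\pi$ on the output is copied onto the empty set of other legs and vanishes, leaving \state{rn}.

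For $\alpha = \pm \pi/2$ no basic rule absorbs the phase directly, so I would first derive the multiplicative identity
\begin{equation*}
  \innerprod{gn, label={[gphase]right:$\alpha$}}{rn} \cdot \innerprod{gn, label={[gphase]right:$\beta$}}{rn} = \innerprodgr \cdot \innerprod{gn, label={[gphase]right:$\alpha+\beta$}}{rn},
\end{equation*}
which follows by rewriting the disconnected factor $\state{rn} \cdot \state{rn}$ with the scaled copy rule into $\innerprodgr$ times a green splitter on \state{rn}, then merging that green splitter with the two green effects via the spider rule. Combined with the already-established $\alpha = \pi$ case, this pins $\innerprod{gn, label={[gphase]right:$\pm\pi/2$}}{rn}$ down to $\pm \innerprodgr$. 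To fix the sign, I would use the scaled Euler decomposition \eqref{rr:Euler} to rewrite $\phase{gn, label={[gphase]right:$\pm\pi/2$}} \cdot \state{rn}$ as a Hadamard composed with a red $\mp\pi/2$ phase shift (acting on \state{rn}) up to an explicit scalar. The red phase shift merges into \state{rn} via the spider rule producing $\state{rn, label={[rphase]right:$\mp\pi/2$}}$; the Hadamard is then pushed past the remaining green phase via the color change rule and the derived identity \Hadamard \cdot \state{gn} = \state{rn}, and the leftover scalars collapse using \eqref{eq:halfscalar_innerprodgr2} and the star rule. The main obstacle is the scalar bookkeeping in this last step, because the Euler rule carries the non-trivial complex-phase factor \innerprod{gn, label={[gphase]right:$-\pi/2$}}{rn, label={[rphase]right:$-\pi/2$}} that must be inverted using \halfscalar together with a compensating complex-phase diagram drawn from the list \eqref{eq:complex_phases}.
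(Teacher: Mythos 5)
Your treatment of $\alpha=0$ and $\alpha=\pi$ coincides with the paper's: equation \eqref{eq:pi_remove} is exactly the spider-split of the effect followed by the zero-legged $\pi$-copy absorption you describe. Your multiplicative identity
\begin{equation*}
 \innerprod{gn, label={[gphase]right:$\alpha$}}{rn} \; \innerprod{gn, label={[gphase]right:$\beta$}}{rn} \; = \; \innerprodgr \; \innerprod{gn, label={[gphase]right:$\alpha+\beta$}}{rn}
\end{equation*}
is also correctly derived (it is the $\state{rn}$-analogue of Lemma \ref{lem:pi_multiplication}, proved the same way). The gap is in the $\alpha=\pm\pi/2$ case, and it is twofold. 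First, ``pins down to $\pm$\innerprodgr'' is a semantic observation, not a graphical derivation: writing $x$ for \innerprod{gn, label={[gphase]right:$\pi/2$}}{rn}, all your identity plus the $\pi$ case yields is that $x$ squared equals \innerprodgr{} \innerprodgr{} and that $x$ equals its $-\pi/2$ counterpart; an interpretation sending $x$ to $-\sqrt2$ satisfies every relation you have established, so no sequence of rewrites built from them can conclude $x=\,$\innerprodgr{} --- there is no square-root rule in the calculus. Second, the Euler-decomposition step meant to fix the sign is circular: however you isolate \phase{gn, label={[gphase]right:$\pi/2$}} from \eqref{rr:Euler}, after absorbing the red phase shift into \state{rn} and pushing the Hadamard through by colour change, you are left with a $\pm\pi/2$ phase shift of one colour applied to a basis state of the other colour --- which is, up to a colour swap, exactly the quantity you set out to evaluate. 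The recursion never bottoms out.

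The paper breaks this circularity with a different pairing. It multiplies \innerprod{gn, label={[gphase]right:$\alpha$}}{rn} by \halfscalar{} \innerprodgr{} \innerprodgr{}, converts one copy of \innerprodgr{} into \innerprod{gn}{rn, label={[rphase]right:$\pi$}} via \eqref{eq:pi_remove}, expands that as \innerprod{gn, label={[gphase]right:$-\alpha$}}{rn, label={[rphase]right:$\pi$}} \innerprod{gn, label={[gphase]right:$\alpha$}}{rn, label={[rphase]right:$\pi$}} by reading Lemma \ref{lem:pi_multiplication} right-to-left, and then fuses \innerprod{gn, label={[gphase]right:$\alpha$}}{rn, label={[rphase]right:$\pi$}} with \innerprod{gn, label={[gphase]right:$\alpha$}}{rn} using the copy, $\pi$-copy and spider rules, after which Lemma \ref{lem:pi_multiplication} and the star rule cancel everything down to \innerprodgr. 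The ingredient you are missing is precisely this coupling of the unknown inner product to a $\pi$-phased one: the diagrams \innerprod{gn, label={[gphase]right:$\alpha$}}{rn, label={[rphase]right:$\pi$}} carry the phase $e^{i\alpha}$ in a form the lemma can add and cancel, which is what turns the semantic fact into a derivation. As written, your proof establishes the lemma only for $\alpha\in\{0,\pi\}$.
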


\begin{lem}\label{lem:y-states}
 The states \state{gn, label={[gphase]right:$\pi/2$}} and \state{rn, label={[rphase]right:$-\pi/2$}} are equal up to a complex phase:
 \begin{equation}
  \state{rn, label={[rphase]right:$-\pi/2$}} = \halfscalar \; \innerprod{gn, label={[gphase]right:$-\pi/2$}}{rn, label={[rphase]right:$-\pi/2$}} \; \state{gn, label={[gphase]right:$\pi/2$}}.
 \end{equation}
\end{lem}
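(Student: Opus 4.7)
My plan is to decompose the Hadamard operator via the scaled Euler rule and let spider fusion absorb the resulting phase shifts into $\state{gn, label={[gphase]right:$\pi/2$}}$.

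First, I apply the colour change rule (specialised to zero inputs and one output) to rewrite $\state{rn, label={[rphase]right:$-\pi/2$}}$ as $H\,\state{gn, label={[gphase]right:$-\pi/2$}}$. Composing the Euler rule \eqref{rr:Euler} with $\state{gn, label={[gphase]right:$-\pi/2$}}$ attached to the input of $H$ then yields
\begin{equation*}
\innerprodgr\,\innerprodgr\,\state{rn, label={[rphase]right:$-\pi/2$}} \;=\; \innerprod{gn, label={[gphase]right:$-\pi/2$}}{rn, label={[rphase]right:$-\pi/2$}}\; \phase{gn, label={[gphase]right:$\pi/2$}}\,\phase{rn, label={[rphase]right:$\pi/2$}}\,\phase{gn, label={[gphase]right:$\pi/2$}}\,\state{gn, label={[gphase]right:$-\pi/2$}}.
\end{equation*}
Spider fusion of the bottom green $\pi/2$ phase shift with the green $-\pi/2$ state collapses the two green phases to $0$, so the right-hand side reduces to $\innerprod{gn, label={[gphase]right:$-\pi/2$}}{rn, label={[rphase]right:$-\pi/2$}}\,\phase{gn, label={[gphase]right:$\pi/2$}}\,\phase{rn, label={[rphase]right:$\pi/2$}}\,\state{gn}$.

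The central technical step --- and what I expect to be the main obstacle --- is to show that $\phase{rn, label={[rphase]right:$\pi/2$}}\,\state{gn} = \state{gn}$, i.e.\ that a red $\pi/2$-phase shift stabilises the zero-phase green state. This is the diagrammatic counterpart of the fact that $\ket{+}$ is a $+1$ eigenvector of $X$-rotations, and it does not appear to drop out of any single rewrite rule. My strategy is to handle the analogous $\pi$-case ($\phase{rn, label={[rphase]right:$\pi$}}\,\state{gn} = \state{gn}$) first using the $\pi$-copy rule applied to the tensor square $\state{gn}\otimes\state{gn}$, and then to bootstrap to the $\pi/2$-case by a similar application of the scaled copy rule together with spider fusion to the tensor square $(\phase{rn, label={[rphase]right:$\pi/2$}}\,\state{gn})^{\otimes 2}$, reducing it to $\state{gn}\otimes\state{gn}$.

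Granting this auxiliary identity, one further spider fusion turns $\phase{gn, label={[gphase]right:$\pi/2$}}\,\state{gn}$ into $\state{gn, label={[gphase]right:$\pi/2$}}$, producing
\begin{equation*}
\innerprodgr\,\innerprodgr\,\state{rn, label={[rphase]right:$-\pi/2$}} \;=\; \innerprod{gn, label={[gphase]right:$-\pi/2$}}{rn, label={[rphase]right:$-\pi/2$}}\,\state{gn, label={[gphase]right:$\pi/2$}}.
\end{equation*}
Finally, multiplying both sides by the star node $\halfscalar$ and invoking \eqref{eq:halfscalar_innerprodgr2} --- which equates $\halfscalar\,\innerprodgr\,\innerprodgr$ with the empty diagram --- peels the leading scalars off the left and produces the claimed equality.
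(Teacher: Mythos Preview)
Your overall strategy coincides with the paper's: colour change, scaled Euler decomposition, spider fusion of the bottom green phases to reach a red $\pi/2$ phase shift acting on \state{gn}, and a final clean-up via \eqref{eq:halfscalar_innerprodgr2}. You have also correctly located the crux.

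The gap is in how you propose to establish that the red $\pi/2$ phase shift fixes \state{gn}. Your bootstrap argument only shows that the tensor \emph{squares} agree, and in the graphical calculus there is no rewrite rule that cancels a common tensor factor: from $A\otimes A=B\otimes B$ one cannot deduce $A=B$ by rewriting, and even semantically this only forces $A=\pm B$. To extract the single-factor equality you would have to cap one wire with an effect and then prove that the two resulting scalars agree --- but that is exactly Lemma~\ref{lem:overlap_with_ket_zero}, which you never invoke. (Your $\pi$-case warm-up has the same structural defect; there, incidentally, the desired identity follows in one step from the colour-swapped, upside-down $\pi$-copy rule with zero outputs, with no tensor-square manoeuvre needed.)

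The paper avoids the detour. It unfuses the red $\pi/2$ phase shift into a phaseless red spider with one input and two outputs, one output capped by a red $\pi/2$ effect; the scaled copy rule then pushes \state{gn} through. One branch fuses to \state{gn,label={[gphase]right:$\pi/2$}}, the other becomes the scalar \innerprod{gn}{rn,label={[rphase]right:$\pi/2$}}, which Lemma~\ref{lem:overlap_with_ket_zero} (via Lemma~\ref{lem:innerprod_wlog}) identifies with \innerprodgr{}, and \eqref{eq:halfscalar_innerprodgr2} then absorbs it.
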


\begin{lem}\label{lem:omega_inverses}
 The scalar diagrams \halfscalar{} \innerprod{gn,label={[gphase]right:$-\pi/2$}}{rn,label={[rphase]right:$-\pi/2$}} and \halfscalar{} \innerprod{gn,label={[gphase]right:$\pi/2$}}{rn,label={[rphase]right:$\pi/2$}} are inverse to each other:
 \begin{equation}
  \halfscalar \innerprod{gn,label={[gphase]right:$-\pi/2$}}{rn,label={[rphase]right:$-\pi/2$}} \halfscalar \innerprod{gn,label={[gphase]right:$\pi/2$}}{rn,label={[rphase]right:$\pi/2$}} = \quad.
 \end{equation}
\end{lem}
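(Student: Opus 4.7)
The plan is to derive the identity by composing Lemma \ref{lem:y-states} with its Hermitian adjoint and then invoking the star rule. Intuitively, $\innerprod{gn, label={[gphase]right:$-\pi/2$}}{rn, label={[rphase]right:$-\pi/2$}}$ and $\innerprod{gn, label={[gphase]right:$\pi/2$}}{rn, label={[rphase]right:$\pi/2$}}$ are complex conjugates of each other as scalars, so their product should be a positive real number cancelled exactly by $\halfscalar\;\halfscalar$.

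First, I take the Hermitian adjoint of Lemma \ref{lem:y-states}: flip every diagram upside-down and negate all phase angles. The star node is real, hence self-adjoint, and after using Lemma \ref{lem:innerprod_wlog} to put the resulting inner product into the standard green-above/red-below orientation, this yields
\begin{equation*}
 \effect{rn, label={[rphase]right:$\pi/2$}} \;=\; \halfscalar\; \innerprod{gn, label={[gphase]right:$\pi/2$}}{rn, label={[rphase]right:$\pi/2$}}\; \effect{gn, label={[gphase]right:$-\pi/2$}}.
\end{equation*}
Composing this effect with the state from the original Lemma \ref{lem:y-states}, the left-hand side becomes $\innerprod{rn, label={[rphase]right:$\pi/2$}}{rn, label={[rphase]right:$-\pi/2$}}$, which by the spider rule merges into a single red node with phase $0$ and no legs; the colour change rule (with zero inputs and zero outputs) identifies this with $\scalar{gn}$. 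On the right, the two green effect/state factors similarly merge into $\scalar{gn}$, while the scalar prefactors collect to give
\begin{equation*}
 \scalar{gn} \;=\; \halfscalar\;\halfscalar\; \innerprod{gn, label={[gphase]right:$\pi/2$}}{rn, label={[rphase]right:$\pi/2$}}\; \innerprod{gn, label={[gphase]right:$-\pi/2$}}{rn, label={[rphase]right:$-\pi/2$}}\; \scalar{gn}.
\end{equation*}

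Finally, I append one more $\halfscalar$ to both sides. The left-hand side becomes $\halfscalar\;\scalar{gn}$, which by the star rule (Definition \ref{dfn:halfscalar}) is the empty diagram; on the right, the trailing $\halfscalar\;\scalar{gn}$ subdiagram collapses the same way, leaving exactly the expression in the statement of the lemma. The one subtle point is justifying the Hermitian-adjoint step, but this is standard: any graphical proof can be flipped upside-down with all phases negated to yield a proof of the adjoint equation, so Lemma \ref{lem:y-states} entails its adjoint version via the same rewrite rules used here. Everything else is a direct application of the spider rule, the colour change rule, Lemma \ref{lem:innerprod_wlog}, and the star rule.
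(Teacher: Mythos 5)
Your proof is correct, but it takes a genuinely different route from the paper's. The paper proves the lemma by direct diagrammatic computation: it uses the identity and spider rules to rewrite the product \innerprod{gn,label={[gphase]right:$-\pi/2$}}{rn,label={[rphase]right:$-\pi/2$}}\;\innerprod{gn,label={[gphase]right:$\pi/2$}}{rn,label={[rphase]right:$\pi/2$}} so that the Euler decomposition can be applied in reverse to reassemble a Hadamard node, then removes that node with the colour change rule and invokes Lemma \ref{lem:overlap_with_ket_zero}, reducing everything to four copies of \innerprodgr{} which the two star nodes cancel via \eqref{eq:halfscalar_innerprodgr2}. Your argument instead composes Lemma \ref{lem:y-states} with its Hermitian adjoint; this is shorter and makes the underlying reason visible (the two scalars are complex conjugates, so their product is real and is killed exactly by \halfscalar{}\;\halfscalar{}), and since Lemma \ref{lem:y-states} is itself proved via the Euler decomposition, both routes ultimately rest on the same axiom. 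The one point where you outrun the paper's explicit toolkit is the adjoint step: the stated meta-convention only closes the rules under upside-down flips, not under phase negation, and the flip-only version of Lemma \ref{lem:y-states} would not do --- composing it with the original gives two red $-\pi/2$ nodes merging to a red $\pi$ node, i.e.\ the zero scalar --- so you genuinely need the full dagger. The meta-theorem that every derivable equation has a derivable adjoint does hold, but establishing it requires checking that each rule's adjoint is derivable; for the Euler decomposition \eqref{rr:Euler} the adjoint is the variant with all phases $-\pi/2$ and the conjugate scalar, which is not literally an instance of the stated rule and needs its own (standard but nontrivial) derivation. Calling this ``standard'' is defensible --- the paper itself appeals to sign-flipped variants of derived equations in the appendix --- but it is the one step a careful reader would ask you to spell out.
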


\subsection{Completeness for scaled operators}

Given the completeness result for non-scalar stabilizer \ZX-calculus diagrams \cite{backens_zx-calculus_2013} and the normal form for scalar diagrams derived in Theorem \ref{thm:scalar_nf}, we can now derive equalities between diagrams that have both scalar and non-scalar parts.

\begin{thm}
 The \ZX-calculus with scaled rewrite rules, including \halfscalar{} and the star rule, is complete for non-zero scaled diagrams, i.e.\ diagrams that contain both scalar and non-scalar parts.
\end{thm}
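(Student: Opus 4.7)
The plan is to reduce completeness for scaled operators to the two completeness results already established: the scalar-free stabilizer completeness theorem of \cite{backens_zx-calculus_2013}, and the scalar normal form theorem (Theorem \ref{thm:scalar_nf}). Given two non-zero diagrams $D_1, D_2$ with $\llbracket D_1 \rrbracket = \llbracket D_2 \rrbracket$, I would first decompose each $D_i$ into the tensor product $S(D_i) \otimes N(D_i)$, where $S(D_i)$ is the tensor of all connected components with no free wires and $N(D_i)$ gathers the remaining components. Both factors on each side are non-zero, and so $\llbracket N(D_1)\rrbracket$ and $\llbracket N(D_2)\rrbracket$ agree up to a non-zero scalar.

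The second step is to transport the non-scalar part using the scalar-free completeness theorem, which supplies a rewrite sequence taking $N(D_1)$ to $N(D_2)$ in the unscaled calculus. When I replay this sequence with the scaled rewrite rules of Section \ref{s:scalars}, each step may leave behind a residual scalar factor --- for instance the $\innerprodgr$ subdiagram appearing on the left of the scaled bialgebra, copy, and Hopf rules, or the additional phase inner products on the left of the scaled $\pi$-commutation rule \eqref{rr:pi-comm} and Euler rule \eqref{rr:Euler}. Because every such residual is invertible, thanks to the star rule together with \eqref{eq:halfscalar_innerprodgr2} and Lemma \ref{lem:omega_inverses}, I can explicitly insert the appropriate inverse at every step and accumulate all residuals into a single scalar subdiagram $T$. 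This produces a genuine derivation $N(D_1) \Rightarrow T \otimes N(D_2)$ in the scaled calculus, and hence $D_1 \Rightarrow S(D_1) \otimes T \otimes N(D_2)$.

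It then remains to equate the total scalar $S(D_1) \otimes T$ with $S(D_2)$. Since every step so far preserves the matrix semantics exactly, these two scalar diagrams have equal and non-zero interpretations. Theorem \ref{thm:scalar_nf} then provides a graphical derivation between them via a common normal form, and tensoring with $N(D_2)$ delivers $D_1 \Rightarrow D_2$ as required. The main obstacle I expect is the bookkeeping in the middle step: one must check, rule by rule, that the scalar discrepancy between a scaled rewrite rule and its scalar-free counterpart always lies within the class of non-zero stabilizer scalars shown earlier to admit explicit graphical inverses. Inspection of Section \ref{s:scalars} confirms that this is the case for every rule used in the scalar-free proof, so the residual tracking, while somewhat tedious, introduces no conceptual difficulty, and the non-zero hypothesis on $D_1$ and $D_2$ guarantees that no intermediate scalar becomes zero and blocks the inversion.
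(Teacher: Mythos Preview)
Your proposal is correct and follows essentially the same strategy as the paper: run the scalar-free completeness algorithm on the non-scalar part while keeping (rather than discarding) all scalar residues, using invertibility of the residual scalars via the star rule and \eqref{eq:halfscalar_innerprodgr2} to ensure every scaled rewrite can be applied, and then finish by invoking Theorem~\ref{thm:scalar_nf} on the accumulated scalars. The only cosmetic difference is that the paper phrases the first step as bringing both diagrams to GS-LC normal form with scalars tracked, whereas you invoke scalar-free completeness as a black box to obtain a rewrite sequence from $N(D_1)$ to $N(D_2)$; these are equivalent uses of \cite{backens_zx-calculus_2013}.
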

\begin{proof}
 Given two scaled diagrams $D$ and $D'$, first apply the algorithm from the completeness proof for non-scalar diagrams, except rather than dropping the scalars, keep track of them.
 Use the invertibility of scalars to make sure that any rewrite rule required for this process can be applied.
 If the non-scalar parts of the diagrams are not equal, clearly the scaled diagrams are not equal.
 Otherwise, proceed by bringing the scalars into normal form as described in Theorem \ref{thm:scalar_nf}.
 This normalisation does not change the non-scalar part of the diagram.

 If the two resulting diagrams are equal, some of the rewrite steps can be inverted to find a series of rewrites transforming $D$ into $D'$ (or conversely), thus proving that the two diagrams are equal according to the rules of the graphical calculus.
 Otherwise, the two diagrams must represent different operators, as multiplying the same operator by two different scalars yields two different operators.
\end{proof}

\subsection{Completeness for stabilizer zero diagrams}
\label{s:zero_completeness}

We have shown that the stabilizer \ZX-calculus is complete for scaled diagrams as long as they are non-zero.
By Corollary \ref{cor:recognize_zero}, any stabilizer zero diagram can be rewritten to explicitly contain one of the following scalar diagrams as a subdiagram:
\begin{equation}\label{eq:zero_scalars2}
 \scalar{gn, label={[gphase]right:$\pi$}}, \quad\quad \scalar{rn, label={[rphase]right:$\pi$}}, \quad\quad \innerprod{gn, label={[gphase]right:$\pi/2$}}{rn, label={[rphase]right:$-\pi/2$}}, \quad \text{ or } \quad \innerprod{gn, label={[gphase]right:$-\pi/2$}}{rn, label={[rphase]right:$\pi/2$}}.
\end{equation}
Of course the calculus does not actually contain four distinct representations of 0, as shown in the following lemma.

\begin{lem}\label{lem:unique_zero}
 Any diagram that contains one of the subdiagrams in \eqref{eq:zero_scalars2} can be rewritten to contain \scalar{gn, label={[gphase]right:$\pi$}}.
\end{lem}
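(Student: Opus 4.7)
The plan is to handle each of the four possible subdiagrams in \eqref{eq:zero_scalars2} separately and show that in every case the containing diagram can be rewritten to contain \scalar{gn, label={[gphase]right:$\pi$}}. Since each of these four subdiagrams is itself a scalar (no external wires), any rewrite I perform on it in isolation immediately lifts to a rewrite of the whole diagram, so I never need to worry about preserving the surrounding context.

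The case \scalar{gn, label={[gphase]right:$\pi$}} is trivial. The case \scalar{rn, label={[rphase]right:$\pi$}} is handled by the colour change rule in the degenerate case of zero inputs and zero outputs (as pointed out explicitly in Section \ref{s:rewrite_rules}), which immediately gives \scalar{rn, label={[rphase]right:$\pi$}} = \scalar{gn, label={[gphase]right:$\pi$}}.

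For the case \innerprod{gn, label={[gphase]right:$\pi/2$}}{rn, label={[rphase]right:$-\pi/2$}}, the key idea is to use Lemma \ref{lem:y-states} to convert the red $-\pi/2$ state into a green $\pi/2$ state (at the cost of scalar factors involving \halfscalar{} and \innerprod{gn, label={[gphase]right:$-\pi/2$}}{rn, label={[rphase]right:$-\pi/2$}}). After substitution, the diagram contains two green nodes of phase $\pi/2$ joined by a single edge, with no external wires. The spider rule then merges them into a single green node of phase $\pi/2 + \pi/2 = \pi$ with no inputs and no outputs, i.e.\ exactly \scalar{gn, label={[gphase]right:$\pi$}}. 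The remaining case \innerprod{gn, label={[gphase]right:$-\pi/2$}}{rn, label={[rphase]right:$\pi/2$}} reduces to the previous one by Lemma \ref{lem:innerprod_wlog}, which says the inner product depends only on the unordered pair of phases.

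There is no serious obstacle here; the only point requiring a bit of care is confirming that after applying Lemma \ref{lem:y-states}, the spider rule genuinely applies to the two green $\pi/2$ nodes (they are connected by a single edge with no other wires between them), so the merged node has zero external legs and therefore is a bona fide scalar subdiagram \scalar{gn, label={[gphase]right:$\pi$}} as required.
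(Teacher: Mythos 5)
Your proposal is correct and follows essentially the same route as the paper: colour change for the red $\pi$ node, Lemma \ref{lem:y-states} followed by the spider rule to turn \innerprod{gn, label={[gphase]right:$\pi/2$}}{rn, label={[rphase]right:$-\pi/2$}} into \halfscalar{} \innerprod{gn, label={[gphase]right:$-\pi/2$}}{rn, label={[rphase]right:$-\pi/2$}} \scalar{gn, label={[gphase]right:$\pi$}}, and Lemma \ref{lem:innerprod_wlog} for the remaining case. The only difference is that you spell out the spider-rule step that the paper leaves implicit.
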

\begin{proof}
 By the colour change law, $\scalar{rn, label={[rphase]right:$\pi$}} \; = \; \scalar{gn, label={[gphase]right:$\pi$}}$.
 Using Lemma \ref{lem:y-states}, we find that
 \begin{equation}
  \innerprod{gn, label={[gphase]right:$\pi/2$}}{rn, label={[rphase]right:$-\pi/2$}} \; = \; \halfscalar \; \innerprod{gn, label={[gphase]right:$-\pi/2$}}{rn, label={[rphase]right:$-\pi/2$}} \; \scalar{gn, label={[gphase]right:$\pi$}}.
 \end{equation}
 This result also applies to \innerprod{gn, label={[gphase]right:$-\pi/2$}}{rn, label={[rphase]right:$\pi/2$}} by Lemma \ref{lem:innerprod_wlog}.
\end{proof}

Given the set of rewrite rules listed so far, the stabilizer \ZX-calculus cannot be complete for zero diagrams as there are no specific rewrite rules involving the zero scalar.
The following new rewrite rule was suggested to resolve this issue \cite{kissinger_communication_2014}:
\begin{equation}
 \begin{tikzpicture}
	\begin{pgfonlayer}{nodelayer}
		\node [style=none] (0) at (-1.25, 1.25) {};
		\node [style=none] (1) at (-1.25, -1.25) {};
		\node [style=none] (2) at (0, -0) {$=$};
		\node [style=gn] (3) at (3, 0.5) {};
		\node [style=none] (4) at (3, 1.25) {};
		\node [style=none] (5) at (3, -1.25) {};
		\node [style=rn] (6) at (3, -0.5) {};
		\node [style=gn, label={[gphase]right:$\pi$}] (7) at (-3, -0) {};
		\node [label={[gphase]right:$\pi$}, style=gn] (8) at (1.25, -0) {};
	\end{pgfonlayer}
	\begin{pgfonlayer}{edgelayer}
		\draw (4.center) to (3);
		\draw (0.center) to (1.center);
		\draw (6) to (5.center);
	\end{pgfonlayer}
\end{tikzpicture}.
\end{equation}
This \emph{zero rule} allows any zero diagram to be transformed into a completely disconnected diagram.
In the scalar-free \ZX-calculus, which was the original context for the introduction of the zero rule, that is sufficient to derive a normal form for zero diagrams.
To make the \emph{scaled} stabilizer \ZX-calculus complete for zero diagrams, we will need to introduce an additional \emph{zero scalar rule}:
\begin{equation}
 \scalar{gn, label={[gphase]right:$\pi$}} \; \scalar{gn, label={[gphase]right:$\alpha$}} \; = \; \scalar{gn, label={[gphase]right:$\pi$}}
\end{equation}
for any $\alpha$.

\begin{thm}\label{thm:zero_completeness}
 The scaled stabilizer \ZX-calculus with \halfscalar{}, the star rule, the zero rule, and the zero scalar rule is complete for zero diagrams.
\end{thm}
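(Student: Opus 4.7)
The plan is to give a normal form for stabilizer zero diagrams and show that every zero diagram rewrites to it. For a zero diagram with $m$ inputs and $n$ outputs, the proposed normal form is \scalar{gn, label={[gphase]right:$\pi$}} tensored with one \effect{rn} capping each input wire and one \state{gn} at each output wire---precisely what the zero rule naturally produces at the boundary. Since this normal form depends only on $(m,n)$, completeness follows immediately once every zero diagram is shown to reduce to it.

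I would proceed in two stages. First, use Corollary \ref{cor:recognize_zero} together with Lemma \ref{lem:unique_zero} to rewrite an arbitrary zero diagram $D$ into a form that explicitly contains \scalar{gn, label={[gphase]right:$\pi$}} as a subdiagram. Second, with that witness in hand, apply the zero rule to every wire in $D$---boundary wires included---leaving a completely disconnected picture consisting of \scalar{gn, label={[gphase]right:$\pi$}}, the $m$ boundary \effect{rn} caps and $n$ boundary \state{gn} caps, and one isolated ``dressed node'' scalar subdiagram per original internal node of $D$ (each being that node with \state{gn} and \effect{rn} stubs where its former wires used to attach).

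The key remaining ingredient is an absorption lemma: for any stabilizer scalar $S$, the composite \scalar{gn, label={[gphase]right:$\pi$}} $S$ is derivably equal to \scalar{gn, label={[gphase]right:$\pi$}}. Using Corollary \ref{cor:decompose_scalars}, any such $S$ decomposes into disconnected segments of at most two original nodes, plus any \halfscalar factors. A lone \scalar{gn, label={[gphase]right:$\alpha$}} is absorbed by the zero scalar rule directly, a lone \scalar{rn, label={[rphase]right:$\alpha$}} after colour change, and a \halfscalar by first inserting a \scalar{gn} factor via the zero scalar rule with $\alpha=0$ and then cancelling the \scalar{gn} \halfscalar pair via the star rule. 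For a two-node segment \innerprod{gn, label={[gphase]right:$\alpha$}}{rn, label={[rphase]right:$\beta$}} I would apply the zero rule to its internal wire: the resulting two disconnected components each consist of two same-colour nodes joined by a wire, which the spider rule collapses to \scalar{gn, label={[gphase]right:$\alpha$}} and \scalar{rn, label={[rphase]right:$\beta$}} respectively, and the single-node cases then finish the job. The dressed-node subdiagrams produced in stage two are themselves stabilizer scalars and are absorbed by the same case analysis.

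Combining the two stages, any zero diagram reduces to the declared normal form, and two zero diagrams of the same type are therefore provably equal, giving completeness. The main obstacle I anticipate is making the absorption lemma airtight: in particular, ensuring that reducing a dressed node with a mixture of \state{gn} and \effect{rn} stubs really falls within the cases covered by Corollary \ref{cor:decompose_scalars}---this will likely require first merging same-colour stubs into the central node via the spider rule, and then appealing to Lemma \ref{lem:innerprod_wlog} to bring any remaining heterochromatic stub configuration into a shape of the form \innerprod{gn, label={[gphase]right:$\alpha$}}{rn, label={[rphase]right:$\beta$}} already handled.
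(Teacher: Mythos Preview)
Your proposal is correct and arrives at the same endpoint as the paper, but the mechanics of disconnecting and cleaning up differ in an instructive way.

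The paper first preprocesses: it Euler-decomposes every Hadamard away and spider-merges adjacent same-coloured nodes, so that every remaining internal edge joins a red node to a green one. It then applies the zero rule \emph{in the orientation matching the colours at the two ends} of each edge---this is the parenthetical ``or its upside-down equivalent, depending on how the colours match up''. With that choice, the two stubs produced by the zero rule spider-merge straight into their respective endpoints, so after processing all edges one is left with genuinely isolated single red or green nodes (plus the boundary caps). These are absorbed in one step each by the zero scalar rule, and \halfscalar{} is handled exactly as you describe. No appeal to Corollary~\ref{cor:decompose_scalars} is needed for the cleanup; the only heavy machinery is Corollary~\ref{cor:recognize_zero} (and Lemma~\ref{lem:unique_zero}) to obtain the initial \scalar{gn, label={[gphase]right:$\pi$}}, which you also invoke.

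Your route cuts every wire without preprocessing or choosing orientation, leaving dressed nodes that can still carry several heterochromatic stubs (or be a Hadamard with stubs, which you do not mention explicitly). These do not collapse to a single \innerprod{gn, label={[gphase]right:$\alpha$}}{rn, label={[rphase]right:$\beta$}} via spider and Lemma~\ref{lem:innerprod_wlog} alone; you genuinely need Corollary~\ref{cor:decompose_scalars} to reduce them, as you note. That works, but it imports the full GS-LC normal-form machinery into a step that the paper dispatches with just the spider, Euler, and zero rules. Your normal form also uses \effect{rn} at inputs rather than the paper's uniform green caps---a harmless cosmetic difference, since the paper itself observes that boundary colours can be flipped by a further zero-rule application.
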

\begin{proof}
 We will show that with the given rules any zero diagram with $n$ inputs and $m$ outputs can be rewritten into the following normal form:
 \begin{equation}\label{eq:zero_normal_form}
  \begin{tikzpicture}[baseline=-0.1cm,decoration=brace]
	\begin{pgfonlayer}{nodelayer}
		\node [label={[gphase]right:$\pi$}, style=gn] (0) at (-2, -0) {};
		\node [style=gn] (1) at (-0.2, -0.5) {};
		\node [style=gn] (2) at (0, 0.5) {};
		\node [style=gn] (3) at (2, 0.5) {};
		\node [style=gn] (4) at (2.2, -0.5) {};
		\node [style=none] (5) at (0, 1) {};
		\node [style=none] (6) at (2, 1) {};
		\node [style=none] (7) at (-0.2, -1) {};
		\node [style=none] (8) at (2.2, -1) {};
		\node [style=none] (9) at (1, 0.5) {$\ldots$};
		\node [style=none] (10) at (1, -0.5) {$\ldots$};
		\node [style=none] (11) at (-0.2, 1.2) {};
		\node [style=none] (12) at (2.2, 1.2) {};
		\node [style=none] (13) at (-0.4, -1.2) {};
		\node [style=none] (14) at (2.4, -1.2) {};
		\node [style=none] (15) at (1, 1.7) {$m$};
		\node [style=none] (16) at (1, -1.7) {$n$};
	\end{pgfonlayer}
	\begin{pgfonlayer}{edgelayer}
		\draw (5.center) to (2);
		\draw (1) to (7.center);
		\draw (6.center) to (3);
		\draw (4) to (8.center);
		\draw [decorate] (11.center) to (12.center);
		\draw [decorate] (14.center) to (13.center);
	\end{pgfonlayer}
\end{tikzpicture}
 \end{equation}
 First, note that the normal form is clearly unique as a zero matrix, the interpretation of any zero diagram, is fully determined by its dimensions.
 
 Now, to rewrite a zero diagram into normal form, first apply the Euler decomposition rule to remove all Hadamard nodes.
 Apply the spider rule until all remaining edges are between one red and one green node.
 Then apply the zero rule (or its upside-down equivalent, depending on how the colours match up) to each edge, transforming the diagram into a completely disconnected graph where the only remaining edges are inputs and outputs of the diagram.
 Further applications of the zero rule (upside-down or not) can be used to change the colour of the nodes connected to inputs or outputs.
 Other than one copy of \scalar{gn, label={[gphase]right:$\pi$}}, any disconnected red and green nodes can be removed using the zero scalar rule.
 Finally, any copies of \halfscalar{} can be removed by using the zero scalar rule to create a new copy of \scalar{gn}, and then applying the star rule.
 This leaves the diagram in the normal form \eqref{eq:zero_normal_form}.
 
 As all rewrite rules are invertible, this implies that any two zero diagrams with the same numbers of inputs and outputs can be rewritten into each other.
 Therefore the \ZX-calculus is complete for stabilizer zero diagrams.
\end{proof}

Many of the results derived in this paper are not actually specific to stabilizer diagrams.
In fact, given the zero and zero scalar rules, any diagram that explicitly contains \scalar{gn, label={[gphase]right:$\pi$}} can be brought into the normal form given in \eqref{eq:zero_normal_form}.
Nevertheless, Theorem \ref{thm:zero_completeness} only holds for stabilizer zero diagrams, as, e.g.\ using the incompleteness example in \cite{schroeder_incomplete_2014}, we can construct non-stabilizer zero diagrams which cannot be rewritten to contain \scalar{gn, label={[gphase]right:$\pi$}} as a subdiagram, and without a copy of \scalar{gn, label={[gphase]right:$\pi$}} the normalisation process described in the proof of Theorem \ref{thm:zero_completeness} does not work.

\section{Example: Quantum key distribution}
\label{s:example}

Consider the BB84 protocol for quantum key distribution using a two-qubit Bell state \cite{bennett_quantum_1984}: Alice and Bob each hold one half of the entangled state, and they each randomly decide to measure their qubit in either the computational basis $\{\ket{0}, \ket{1}\}$ or the Hadamard basis $\{\ket{+}, \ket{-}\}$.
They later compare their choice of basis over a classical communication channel, e.g.\ a telephone line.
Assuming the Bell state is $\frac{1}{\sqrt{2}}\left(\ket{00}+\ket{11}\right)$, if they have picked the same basis, their results always agree and they have a key; if they have picked different bases, their results are uncorrelated.
We are not interested in the security properties of this key distribution scheme here, instead we simply use it as motivation for some \ZX-calculus derivations since all the states and operations required for the protocol are in the stabilizer formalism.

The Bell state given above is represented in the \ZX-calculus as a ``cup'' with some normalising factors:
\begin{equation}
 \frac{1}{\sqrt{2}}\left(\ket{00}+\ket{11}\right) = \left\llbracket \halfscalar \; \innerprodgr \;  \right\rrbracket.
\end{equation}
Measurements in the \ZX-calculus are post-selected, so we have to consider each pair of measurement outcomes in turn.
Graphically, the normalised outcomes of computational and Hadamard basis measurements on single qubits are:
\begin{equation}
 \bra{0}= \left\llbracket \halfscalar{} \; \innerprodgr{} \; \effect{rn} \right\rrbracket, \quad
 \bra{1} = \left\llbracket \halfscalar{} \; \innerprodgr{} \; \effect{rn, label={[rphase]right:$\pi$}} \right\rrbracket, \quad
 \bra{+}= \left\llbracket \halfscalar{} \; \innerprodgr{} \; \effect{gn} \right\rrbracket, \quad\text{and}\quad
 \bra{-} = \left\llbracket \halfscalar{} \; \innerprodgr{} \; \effect{gn, label={[gphase]right:$\pi$}} \right\rrbracket.
\end{equation}

First, assume Alice and Bob both measure in the computational basis. Can they both get outcome 0?
Constructing the \ZX-calculus diagram for the overlap between the Bell state and $\bra{00}$ and then simplifying it using \eqref{eq:innerprodgr_inverse}, the topology rule, the spider rule, and the star rule, yields:
\begin{equation}
 \halfscalar \; \halfscalar \; \halfscalar \; \innerprodgr \; \innerprodgr \; \innerprodgr \; \begin{tikzpicture}
	\begin{pgfonlayer}{nodelayer}
		\node [style=none] (0) at (-0.5, -0) {};
		\node [style=none] (1) at (0.5, -0) {};
		\node [style=none] (2) at (0, -0.5) {};
		\node [style=rn] (3) at (-0.5, 0.25) {};
		\node [style=rn] (4) at (0.5, 0.25) {};
	\end{pgfonlayer}
	\begin{pgfonlayer}{edgelayer}
		\draw (3) to (0.center);
		\draw [bend right=45, looseness=1.00] (0.center) to (2.center);
		\draw [bend right=45, looseness=1.00] (2.center) to (1.center);
		\draw (1.center) to (4);
	\end{pgfonlayer}
\end{tikzpicture} \;
 = \; \halfscalar \; \halfscalar \; \innerprodgr \; \innerprod{rn}{rn} \;
 = \; \halfscalar \; \halfscalar \; \innerprodgr \; \scalar{rn} \;
 = \; \halfscalar \; \innerprodgr,
\end{equation}
which is non-zero, so this outcome is possible.
The probability of this outcome can be found by multiplying this amplitude with its dagger, graphically:
\begin{equation}
 \halfscalar \; \begin{tikzpicture}
	\begin{pgfonlayer}{nodelayer}
		\node [style=rn] (0) at (0, 0.4) {};
		\node [style=gn] (1) at (0, -0.15) {};
	\end{pgfonlayer}
	\begin{pgfonlayer}{edgelayer}
		\draw (0) to (1);
	\end{pgfonlayer}
\end{tikzpicture}
 \; \halfscalar \; \innerprodgr \;
 = \; \halfscalar \; \halfscalar \; \innerprodgr \; \innerprodgr \;
 = \; \halfscalar,
\end{equation}
i.e.\ the probability of Alice and Bob both measuring 0 is $\llbracket\halfscalar\rrbracket=1/2$.
Similarly, the overlap of the Bell state with the effect $\bra{11}$ is, graphically:
\begin{equation}
 \halfscalar \; \halfscalar \; \halfscalar \; \innerprodgr \; \innerprodgr \; \innerprodgr \; \begin{tikzpicture}
	\begin{pgfonlayer}{nodelayer}
		\node [style=none] (0) at (-0.5, -0) {};
		\node [style=none] (1) at (0.5, -0) {};
		\node [style=none] (2) at (0, -0.5) {};
		\node [label={[rphase]left:$\pi$}, style=rn] (3) at (-0.5, 0.25) {};
		\node [label={[rphase]right:$\pi$}, style=rn] (4) at (0.5, 0.25) {};
	\end{pgfonlayer}
	\begin{pgfonlayer}{edgelayer}
		\draw (3) to (0.center);
		\draw [bend right=45, looseness=1.00] (0.center) to (2.center);
		\draw [bend right=45, looseness=1.00] (2.center) to (1.center);
		\draw (1.center) to (4);
	\end{pgfonlayer}
\end{tikzpicture} \;
 = \; \halfscalar \; \halfscalar \; \innerprodgr \; \innerprod{rn, label={[rphase]right:$\pi$}}{rn, label={[rphase]right:$\pi$}} \;
 = \; \halfscalar \; \halfscalar \; \innerprodgr \; \scalar{rn} \;
 = \; \halfscalar \; \innerprodgr,
\end{equation}
so the probability of Alice and Bob both getting measurement outcome 1 is again $1/2$.

On the other hand, if we consider whether Alice can get 0 while Bob gets 1, we find the following:
\begin{equation}
 \halfscalar \; \halfscalar \; \halfscalar \; \innerprodgr \; \innerprodgr \; \innerprodgr \; \begin{tikzpicture}
	\begin{pgfonlayer}{nodelayer}
		\node [style=none] (0) at (-0.5, -0) {};
		\node [style=none] (1) at (0.5, -0) {};
		\node [style=none] (2) at (0, -0.5) {};
		\node [style=rn] (3) at (-0.5, 0.25) {};
		\node [label={[rphase]right:$\pi$}, style=rn] (4) at (0.5, 0.25) {};
	\end{pgfonlayer}
	\begin{pgfonlayer}{edgelayer}
		\draw (3) to (0.center);
		\draw [bend right=45, looseness=1.00] (0.center) to (2.center);
		\draw [bend right=45, looseness=1.00] (2.center) to (1.center);
		\draw (1.center) to (4);
	\end{pgfonlayer}
\end{tikzpicture} \;
 = \; \halfscalar \; \halfscalar \; \; \innerprodgr \; \innerprod{rn}{rn,label={[rphase]right:$\pi$}} \;
 = \; \halfscalar \; \halfscalar \; \innerprodgr \; \scalar{rn,label={[rphase]right:$\pi$}} \;
 = \; \halfscalar \; \halfscalar \; \scalar{gn} \; \scalar{rn} \; \scalar{rn,label={[rphase]right:$\pi$}} \;
 = \; \scalar{gn,label={[gphase]right:$\pi$}},
\end{equation}
where the penultimate equality is by the zero rule and the last one by the star and colour change rules. As $\llbracket\scalar{gn,label={[gphase]right:$\pi$}}\rrbracket=0$, that combination of outcomes is impossible.

If Alice measures in the computational basis and Bob in the Hadamard basis, the probability of Alice getting outcome $\theta$ and Bob getting $\phi$ for some fixed $\theta,\phi\in\{0,\pi\}$ is:
\begin{equation}
 \begin{array}{c} \halfscalar \; \halfscalar \; \halfscalar \;  \;  \;  \\ \halfscalar \; \halfscalar \; \halfscalar \; \innerprodgr \; \innerprodgr \; \innerprodgr \end{array} \begin{tikzpicture}
	\begin{pgfonlayer}{nodelayer}
		\node [style=none] (0) at (-0.5, 0.75) {};
		\node [style=none] (1) at (0.5, 0.75) {};
		\node [style=none] (2) at (0, 0.25) {};
		\node [style=rn, label={[rphase]left:$-\theta$}] (3) at (-0.5, 1) {};
		\node [style=gn, label={[gphase]right:$-\phi$}] (4) at (0.5, 1) {};
		\node [style=none] (5) at (0.5, -0.75) {};
		\node [style=rn, label={[rphase]left:$\theta$}] (6) at (-0.5, -1) {};
		\node [style=none] (7) at (-0.5, -0.75) {};
		\node [style=none] (8) at (0, -0.25) {};
		\node [style=gn, label={[gphase]right:$\phi$}] (9) at (0.5, -1) {};
	\end{pgfonlayer}
	\begin{pgfonlayer}{edgelayer}
		\draw (3) to (0.center);
		\draw [bend right=45, looseness=1.00] (0.center) to (2.center);
		\draw [bend right=45, looseness=1.00] (2.center) to (1.center);
		\draw (1.center) to (4);
		\draw (9) to (5.center);
		\draw [bend right=45, looseness=1.00] (5.center) to (8.center);
		\draw [bend right=45, looseness=1.00] (8.center) to (7.center);
		\draw (7.center) to (6);
	\end{pgfonlayer}
\end{tikzpicture} \;
 = \; \halfscalar \; \halfscalar \; \halfscalar \; \innerprod{gn,label={[gphase]right:$-\phi$}}{rn,label={[rphase]right:$-\theta$}} \; \innerprod{gn,label={[gphase]right:$\phi$}}{rn,label={[rphase]right:$\theta$}} \;
 = \; \halfscalar \; \halfscalar \; \halfscalar \; \innerprod{gn}{rn} \; \innerprod{gn}{rn} \;
 = \; \halfscalar \; \halfscalar,
\end{equation}
where the penultimate equality is by Lemmas \ref{lem:pi_multiplication} and \ref{lem:overlap_with_ket_zero}.
Thus if Alice and Bob choose different bases, their outcomes are random and uncorrelated: each of the four combinations of outcomes has probability $\llbracket\halfscalar\;\halfscalar\rrbracket = 1/4$.

\section{Conclusions}
\label{s:conclusions}

Completeness is the property of a graphical language for quantum processes of having the same deductive power as matrix mechanics.
We show how the introduction of a new type of \ZX-calculus element, called star node, along with a new rewrite rule relating the star node to existing \ZX-calculus elements, allows the scalar-free stabilizer completeness proof to be extended to include non-zero scalars.
The \ZX-calculus can thus be used to compute amplitudes and probabilities.
Furthermore, we show that, as implicitly assumed in the scalar-free completeness proof, it is straightforward to recognise stabilizer zero diagrams, and that two new rewrite rules suffice to derive a unique normal form for these zero diagrams.

These results allow any problem in pure stabilizer quantum mechanics to be analysed entirely graphically, including questions about inner products between pure states or probabilities associated with pure projective measurements.
Thus, as a next step, procedures for solving such problems could be implemented in the automated graph rewriting system \texttt{Quantomatic} \cite{quantomatic}, which can rewrite \ZX-calculus diagrams either fully independently or in a user-guided manner.
It would also be interesting to see whether the completeness results can be extended to mixed states and completely positive operators, for example via the \emph{CPM construction} \cite{selinger_dagger_2007}.
Work is ongoing to prove completeness results for other fragments of pure state qubit quantum mechanics.

\section*{Acknowledgments}

I would like to thank Aleks Kissinger for pointing out the issue of zero diagram completeness and for constructive discussion on how to resolve it, and Dominic Horsman for helpful comments on this paper.
This work was supported by EPSRC and by the John Templeton Foundation.

\bibliographystyle{eptcs}
\bibliography{refs}

\begin{thebibliography}{10}
\providecommand{\bibitemdeclare}[2]{}
\providecommand{\surnamestart}{}
\providecommand{\surnameend}{}
\providecommand{\urlprefix}{Available at }
\providecommand{\url}[1]{\texttt{#1}}
\providecommand{\href}[2]{\texttt{#2}}
\providecommand{\urlalt}[2]{\href{#1}{#2}}
\providecommand{\doi}[1]{doi:\urlalt{http://dx.doi.org/#1}{#1}}
\providecommand{\bibinfo}[2]{#2}

\bibitemdeclare{misc}{quantomatic}
\bibitem{quantomatic}
\emph{\bibinfo{title}{Quantomatic}}.
\newblock \bibinfo{howpublished}{\url{https://quantomatic.github.io/}}.

\bibitemdeclare{article}{backens_zx-calculus_2013}
\bibitem{backens_zx-calculus_2013}
\bibinfo{author}{Miriam \surnamestart Backens\surnameend}
  (\bibinfo{year}{2014}): \emph{\bibinfo{title}{The {ZX}-calculus is complete
  for stabilizer quantum mechanics}}.
\newblock {\sl \bibinfo{journal}{New Journal of Physics}}
  \bibinfo{volume}{16}(\bibinfo{number}{9}), p. \bibinfo{pages}{093021},
  \doi{10.1088/1367-2630/16/9/093021}.

\bibitemdeclare{inproceedings}{bennett_quantum_1984}
\bibitem{bennett_quantum_1984}
\bibinfo{author}{C.~H. \surnamestart Bennett\surnameend} \&
  \bibinfo{author}{G.~\surnamestart Brassard\surnameend}
  (\bibinfo{year}{1984}): \emph{\bibinfo{title}{Quantum cryptography: {Public}
  key distribution and coin tossing}}.
\newblock In: {\sl \bibinfo{booktitle}{Proceedings of {IEEE} {International}
  {Conference} on {Computers}, {Systems} and {Signal} {Processing}}},
  \bibinfo{address}{Bangalore, India}, pp. \bibinfo{pages}{175--179}.

\bibitemdeclare{incollection}{coecke_interacting_2008}
\bibitem{coecke_interacting_2008}
\bibinfo{author}{Bob \surnamestart Coecke\surnameend} \& \bibinfo{author}{Ross
  \surnamestart Duncan\surnameend} (\bibinfo{year}{2008}):
  \emph{\bibinfo{title}{Interacting Quantum Observables}}.
\newblock In: {\sl \bibinfo{booktitle}{Automata, Languages and Programming}},
  \bibinfo{volume}{5126}, \bibinfo{publisher}{Springer Berlin Heidelberg}, pp.
  \bibinfo{pages}{298--310}, \doi{10.1007/978-3-540-70583-3\_25}.

\bibitemdeclare{article}{coecke_interacting_2011}
\bibitem{coecke_interacting_2011}
\bibinfo{author}{Bob \surnamestart Coecke\surnameend} \& \bibinfo{author}{Ross
  \surnamestart Duncan\surnameend} (\bibinfo{year}{2011}):
  \emph{\bibinfo{title}{Interacting quantum observables: categorical algebra
  and diagrammatics}}.
\newblock {\sl \bibinfo{journal}{New Journal of Physics}}
  \bibinfo{volume}{13}(\bibinfo{number}{4}), p. \bibinfo{pages}{043016},
  \doi{10.1088/1367-2630/13/4/043016}.

\bibitemdeclare{incollection}{duncan_graph_2009}
\bibitem{duncan_graph_2009}
\bibinfo{author}{Ross \surnamestart Duncan\surnameend} \&
  \bibinfo{author}{Simon \surnamestart Perdrix\surnameend}
  (\bibinfo{year}{2009}): \emph{\bibinfo{title}{Graph States and the Necessity
  of {Euler} Decomposition}}.
\newblock In: {\sl \bibinfo{booktitle}{Mathematical Theory and Computational
  Practice}}, \bibinfo{volume}{5635}, \bibinfo{publisher}{Springer Berlin
  Heidelberg}, pp. \bibinfo{pages}{167--177},
  \doi{10.1007/978-3-642-03073-4\_18}.

\bibitemdeclare{misc}{kissinger_communication_2014}
\bibitem{kissinger_communication_2014}
\bibinfo{author}{Aleks \surnamestart Kissinger\surnameend}
  (\bibinfo{year}{2014}): \bibinfo{howpublished}{personal communication}.

\bibitemdeclare{book}{nielsen_quantum_2010}
\bibitem{nielsen_quantum_2010}
\bibinfo{author}{Michael~A. \surnamestart Nielsen\surnameend} \&
  \bibinfo{author}{Isaac~L. \surnamestart Chuang\surnameend}
  (\bibinfo{year}{2010}): \emph{\bibinfo{title}{{Quantum Computation and
  Quantum Information}}}.
\newblock \bibinfo{publisher}{Cambridge University Press},
  \bibinfo{address}{Cambridge}.

\bibitemdeclare{article}{raussendorf_one-way_2001}
\bibitem{raussendorf_one-way_2001}
\bibinfo{author}{Robert \surnamestart Raussendorf\surnameend} \&
  \bibinfo{author}{Hans~J. \surnamestart Briegel\surnameend}
  (\bibinfo{year}{2001}): \emph{\bibinfo{title}{A One-Way Quantum Computer}}.
\newblock {\sl \bibinfo{journal}{Physical Review Letters}}
  \bibinfo{volume}{86}(\bibinfo{number}{22}), pp. \bibinfo{pages}{5188--5191},
  \doi{10.1103/PhysRevLett.86.5188}.

\bibitemdeclare{article}{selinger_dagger_2007}
\bibitem{selinger_dagger_2007}
\bibinfo{author}{Peter \surnamestart Selinger\surnameend}
  (\bibinfo{year}{2007}): \emph{\bibinfo{title}{Dagger Compact Closed
  Categories and Completely Positive Maps {(Extended Abstract)}}}.
\newblock {\sl \bibinfo{journal}{Electronic Notes in Theoretical Computer
  Science}} \bibinfo{volume}{170}(\bibinfo{number}{0}), pp.
  \bibinfo{pages}{139--163}, \doi{10.1016/j.entcs.2006.12.018}.

\bibitemdeclare{article}{schroeder_incomplete_2014}
\bibitem{schroeder_incomplete_2014}
\bibinfo{author}{Christian \surnamestart Schr\"{o}der~de Witt\surnameend} \&
  \bibinfo{author}{Vladimir \surnamestart Zamdzhiev\surnameend}
  (\bibinfo{year}{2014}): \emph{\bibinfo{title}{The {ZX}-calculus is incomplete
  for quantum mechanics}}.
\newblock {\sl \bibinfo{journal}{EPTCS}} \bibinfo{volume}{172}, pp.
  \bibinfo{pages}{285--292}, \doi{10.4204/EPTCS.172.20}.

\end{thebibliography}

\appendix

\section{Proofs of corollaries to the scalar-free completeness result}
\label{s:corollary_proofs}

The corollaries rely not on the actual completeness proof in \cite{backens_zx-calculus_2013}, but on a normal form result used in that proof.
We recap the definition of the normal form, called GS-LC form, and the process for rewriting into normal form before proceeding to the proofs of the corollaries.

A diagram in GS-LC form consists of a graph state with single-qubit Clifford unitaries applied to the outputs (cf.\ Definition 11 in \cite{backens_zx-calculus_2013}).
A graph state diagram on $n$ qubits consists of $n$ green nodes with one output each (the nodes of the graph), and some number of Hadamard nodes, each connected to two distinct green nodes (the edges of the graph).
All GS-LC diagrams are states, but by the Choi-Jamio{\l}kowski isomorphism, this normal form and the associated results extend to arbitrary diagrams.
An example GS-LC diagram is shown in Figure \ref{fig:GS-LC}.

\begin{figure}
 \centering
 \begin{tikzpicture}[baseline=-0.1cm]
	\begin{pgfonlayer}{nodelayer}
		\node [style=gn] (0) at (4.5, -2) {};
		\node [style=gn] (1) at (1.5, -0) {};
		\node [style=gn] (2) at (-1.5, -0) {};
		\node [style=gn] (3) at (-4.5, -2) {};
		\node [style=none] (4) at (-4.5, 2.75) {};
		\node [style=none] (5) at (-1.5, 2.75) {};
		\node [style=none] (6) at (1.5, 2.75) {};
		\node [style=none] (7) at (4.5, 2.75) {};
		\node [style=Hadamard] (8) at (0, -0) {};
		\node [style=Hadamard] (9) at (-1.5, -1) {};
		\node [style=Hadamard] (10) at (0, -2) {};
		\node [style=Hadamard] (11) at (1.5, -1) {};
		\node [style=rn, label={[rphase]right:$\pi/2$}] (12) at (1.5, 1) {};
		\node [style=gn, label={[gphase]right:$-\pi/2$}] (13) at (-1.5, 1) {};
		\node [style=gn, label={[gphase]right:$-\pi/2$}] (14) at (-4.5, 1) {};
		\node [style=Hadamard] (15) at (-3, -1) {};
		\node [style=Hadamard] (16) at (-1.5, 2) {};
		\node [style=rn, label={[rphase]right:$\pi$}] (17) at (-4.5, 2) {};
	\end{pgfonlayer}
	\begin{pgfonlayer}{edgelayer}
		\draw (2) to (1);
		\draw (0) to (2);
		\draw (3) to (1);
		\draw (3) to (0);
		\draw (4.center) to (3);
		\draw (5.center) to (2);
		\draw (6.center) to (1);
		\draw (7.center) to (0);
		\draw (2) to (3);
	\end{pgfonlayer}
\end{tikzpicture}
 \caption{A scalar-free stabilizer \ZX-calculus diagram in GS-LC form.}
 \label{fig:GS-LC}
\end{figure}
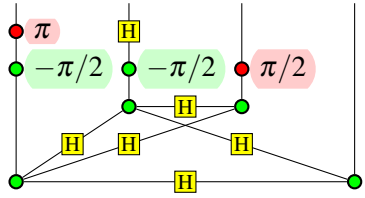

Any diagram in the scalar-free stabilizer \ZX-calculus can be brought into GS-LC form by first decomposing it into four basic spiders:
\begin{center}
 \splitnode, $\qquad$ \joinnode, $\qquad$ \state{gn}, $\quad$ and $\quad$ \effect{gn},
\end{center}
plus phase shifts and Hadamard nodes (cf.\ Lemma 6 in \cite{backens_zx-calculus_2013}).
The only basic element with no inputs is \state{gn}, therefore any state diagram must contain at least one copy of that (or a cup, which can be rewritten into spiders).
Clearly, \state{gn} is a GS-LC diagram.
Now, for any of the basic spiders, and for any single-qubit Clifford operator, applying it to some GS-LC diagram yields a diagram that can be rewritten to GS-LC form (cf.\ Lemmas 8--12 in \cite{backens_zx-calculus_2013}).
Thus, by induction, any stabilizer state diagram can be brought into GS-LC form (cf.\ Theorem 7 in \cite{backens_zx-calculus_2013}).

\begin{proof}[Proof of Corollary \ref{cor:decompose_scalars}]
 Take any connected component of the scalar diagram which contains more than two nodes.
 Rewrite it into the inner product between some (possibly complicated) state diagram and \effect{gn}.
 This can be done for any connected scalar diagram by decomposing it into basic spiders.
 As \effect{gn} is the only basic spider with no outputs, the scalar must contain at least one copy of \effect{gn} (or a cap, which can be rewritten into spiders).

 The state diagram part, which represents a single-qubit state, can then be brought into GS-LC form.
 Any scalar subdiagrams that ``split off'' the main part of the diagram in this rewriting process  consist of disconnected segments containing at most two nodes each, as can easily be checked by looking at the rewrite rules and the proofs of Lemmas 8--12 in \cite{backens_zx-calculus_2013}.
 Up to scalar factor, the six single-qubit stabilizer states can be written as:
 \[
  \state{gn}, \quad
  \state{gn, label={[gphase]right:$\pi/2$}}, \quad
  \state{gn, label={[gphase]right:$\pi$}}, \quad
  \state{gn, label={[gphase]right:$-\pi/2$}}, \quad 
  \state{rn}, \quad \text{and} \quad 
  \state{rn, label={[rphase]right:$\pi$}}.
 \]
 By the scalar-free stabilizer completeness result, any diagram representing a single-qubit state can be rewritten to one of the above single-node diagrams.
 This node combines with the green effect into a two-node diagram.
\end{proof}

When the scaled rewrite rules are used instead of the scalar-free ones, Corollary \ref{cor:decompose_scalars} remains valid because the scalars appearing in the rewrite rules all consist of disconnected components with at most two nodes each.

\begin{proof}[Proof of Corollary \ref{cor:recognize_zero}]
 A GS-LC diagram in the scalar-free \ZX-calculus cannot be zero as it consists of unitary operators -- controlled-$Z$ gates and single-qubit unitaries -- applied to the state $\state{gn}\ldots\state{gn}$.
 Therefore, a scaled GS-LC diagram is zero if and only if its scalar part is zero.
 Now, for any two scalar diagrams \genscalar{$s$} and \genscalar{$r$},
 \begin{equation}
  \left\llbracket \genscalar{$s$} \; \genscalar{$r$} \right\rrbracket = \left\llbracket \genscalar{$s$} \right\rrbracket \left\llbracket \genscalar{$r$} \right\rrbracket,
 \end{equation}
 i.e.\ putting two scalar diagrams next to each other corresponds to taking the product of their values.
 Thus a scalar diagram is zero if and only if at least one of the disconnected components is zero.
 It is straightforward to check that out of all connected scalar diagrams containing at most two nodes, the diagrams given in \eqref{eq:zero_scalars} are exactly the ones that are zero.
 The result then follows from Corollary \ref{cor:decompose_scalars} and the fact that any stabilizer diagram can be brought into GS-LC form.
\end{proof}

\section{Proofs of theorem and lemmas from Section \ref{s:completeness}}
\label{s:appendix}

\begin{proof}[Proof of Theorem \ref{thm:scalar_nf}]
 By Corollary \ref{cor:decompose_scalars}, we only need to consider diagrams made up of disconnected segments of at most two nodes each.
 Using Lemma \ref{lem:y-states}, any disconnected single node \scalar{gn, label={[gphase]right:$\alpha$}} or \scalar{rn, label={[rphase]right:$\beta$}} can be rewritten into a diagram that is made up of copies of \halfscalar{} and components consisting of exactly one red and one green node.
 Given this, Lemmas \ref{lem:innerprod_wlog} and \ref{lem:overlap_with_ket_zero}, and the fact that we are considering non-zero diagrams only, we can restrict our attention without loss of generality to diagrams built only from \halfscalar{}, \innerprodgr{}, and the following two-node diagrams:
 \begin{equation}\label{eq:basic_overlaps}
  \innerprod{gn,label={[gphase]right:$\pi/2$}}{rn,label={[rphase]right:$\pi/2$}}, \quad
  \innerprod{gn,label={[gphase]right:$\pi/2$}}{rn,label={[rphase]right:$\pi$}}, \quad
  \innerprod{gn,label={[gphase]right:$\pi$}}{rn,label={[rphase]right:$\pi$}}, \quad
  \innerprod{gn,label={[gphase]right:$-\pi/2$}}{rn,label={[rphase]right:$\pi$}}, \; \text{ and } \;
  \innerprod{gn,label={[gphase]right:$-\pi/2$}}{rn,label={[rphase]right:$-\pi/2$}}.
 \end{equation}
 These diagrams can easily be seen to also be in the set \eqref{eq:complex_phases}.
 Thus all that remains to show is that a diagram consisting of several of these elements can be rewritten to a diagram that consists of only one of the diagrams given in \eqref{eq:complex_phases} (or the empty diagram) plus any number of copies of \innerprodgr{} and \halfscalar.
 This rewriting can be done in a step-by-step fashion, so it suffices to look at pairs of diagrams from \eqref{eq:basic_overlaps}.

 The combination of any two diagrams that both contain \state{rn, label={[rphase]right:$\pi$}} can be simplified using Lemma \ref{lem:pi_multiplication}. The products
 \begin{center}
  \innerprod{gn,label={[gphase]right:$\pi/2$}}{rn,label={[rphase]right:$\pi/2$}} \innerprod{gn,label={[gphase]right:$-\pi/2$}}{rn,label={[rphase]right:$-\pi/2$}}, $\qquad$ \innerprod{gn,label={[gphase]right:$\pi/2$}}{rn,label={[rphase]right:$\pi$}} \innerprod{gn,label={[gphase]right:$\pi/2$}}{rn,label={[rphase]right:$\pi/2$}}, $\quad$ and $\quad$ \innerprod{gn,label={[gphase]right:$-\pi/2$}}{rn,label={[rphase]right:$\pi$}} \innerprod{gn,label={[gphase]right:$-\pi/2$}}{rn,label={[rphase]right:$-\pi/2$}}
 \end{center}
 are straightforward as well: the first diagram is shown to be simplifiable in Lemma \ref{lem:omega_inverses}, the latter two are elements of \eqref{eq:complex_phases}, so do not need to be simplified.

 For other combinations, note first that using the spider rule, the $\pi$-copy rule, and the $\pi$-commutation rule, we have:
 \begin{equation}\label{eq:scalar_pi_2_equality}
  \begin{tikzpicture}
	\begin{pgfonlayer}{nodelayer}
		\node [label={[rphase]right:$\pi$}, style=rn] (0) at (-2.75, 0.5) {};
		\node [style=none] (1) at (0.5, -0) {$=$};
		\node [label={[gphase]right:$-\pi/2$}, style=gn] (2) at (-2.75, -0.5) {};
		\node [style=gn] (3) at (-3.75, 0.5) {};
		\node [style=rn] (4) at (-3.75, -0.5) {};
		\node [label={[rphase]right:$\pi$}, style=rn] (5) at (1.75, -0.5) {};
		\node [label={[gphase]right:$-\pi/2$}, style=gn] (6) at (1.75, 0.5) {};
		\node [style=gn] (7) at (-8.75, 0.5) {};
		\node [style=rn] (8) at (-8.75, -0.5) {};
		\node [label={[gphase]right:$-\pi/2$}, style=gn] (9) at (-8, -0) {};
		\node [label={[rphase]right:$-\pi/2$}, style=rn] (10) at (-8, -1.25) {};
		\node [style=none] (11) at (-4.75, -0) {$=$};
		\node [style=rn] (12) at (-13.75, -0.625) {};
		\node [style=rn, label={[rphase]right:$-\pi/2$}] (13) at (-13, -0.625) {};
		\node [style=gn, label={[gphase]right:$-\pi/2$}] (14) at (-13, 0.625) {};
		\node [style=gn] (15) at (-13.75, 0.625) {};
		\node [style=none] (16) at (-9.75, -0) {$=$};
		\node [style=gn] (17) at (-8, 1.25) {};
		\node [style=gn] (18) at (-2.75, 1.5) {};
		\node [style=rn, label={[rphase]right:$-\pi/2$}] (19) at (-2.75, -1.75) {};
		\node [style=gn, label={[gphase]right:$\pi/2$}] (20) at (5, 0.5) {};
		\node [style=rn, label={[rphase]right:$\pi$}] (21) at (5, -0.5) {};
		\node [label={[rphase]right:$-\pi/2$}, style=rn] (22) at (5, -1.5) {};
		\node [style=gn] (23) at (5, 1.5) {};
		\node [style=none] (24) at (7.5, -0) {$=$};
		\node [style=rn, label={[rphase]right:$\pi$}] (25) at (8.75, -0.625) {};
		\node [style=rn, label={[rphase]right:$\pi/2$}] (26) at (12, -0.625) {};
		\node [style=gn, label={[gphase]right:$-\pi/2$}] (27) at (8.75, 0.625) {};
		\node [style=gn, label={[gphase]right:$\pi/2$}] (28) at (12, 0.625) {};
	\end{pgfonlayer}
	\begin{pgfonlayer}{edgelayer}
		\draw (3) to (4);
		\draw (6) to (5);
		\draw (7) to (8);
		\draw (15) to (12);
		\draw (14) to (13);
		\draw (17) to (10);
		\draw (18) to (19);
		\draw (23) to (22);
		\draw (27) to (25);
		\draw (28) to (26);
	\end{pgfonlayer}
\end{tikzpicture}.
 \end{equation}
 Thus:
 \begin{equation}\label{eq:omega-dagger_squared}
  \innerprod{gn,label={[gphase]right:$-\pi/2$}}{rn,label={[rphase]right:$-\pi/2$}} \; \innerprod{gn,label={[gphase]right:$-\pi/2$}}{rn,label={[rphase]right:$-\pi/2$}} \;
  = \; \halfscalar \; \innerprod{gn}{rn} \; \innerprod{gn,label={[gphase]right:$-\pi/2$}}{rn,label={[rphase]right:$\pi$}} \; \innerprod{gn,label={[gphase]right:$\pi/2$}}{rn,label={[rphase]right:$\pi/2$}} \; \innerprod{gn,label={[gphase]right:$-\pi/2$}}{rn,label={[rphase]right:$-\pi/2$}} \;
  = \; \innerprod{gn}{rn} \; \innerprod{gn}{rn} \; \innerprod{gn}{rn} \; \innerprod{gn,label={[gphase]right:$-\pi/2$}}{rn,label={[rphase]right:$\pi$}},
 \end{equation}
 where the last equality is by Lemma \ref{lem:omega_inverses}.
 Furthermore, using \eqref{eq:scalar_pi_2_equality} and Lemma \ref{lem:pi_multiplication}, we get:
 \begin{equation}\label{eq:minus_omega}
  \innerprod{gn,label={[gphase]right:$\pi$}}{rn,label={[rphase]right:$\pi$}} \; \innerprod{gn,label={[gphase]right:$\pi/2$}}{rn,label={[rphase]right:$\pi/2$}} \;
  = \; \halfscalar \; \innerprod{gn}{rn} \; \innerprod{gn,label={[gphase]right:$-\pi/2$}}{rn,label={[rphase]right:$\pi$}} \; \innerprod{gn,label={[gphase]right:$-\pi/2$}}{rn,label={[rphase]right:$\pi$}} \; \innerprod{gn,label={[gphase]right:$\pi/2$}}{rn,label={[rphase]right:$\pi/2$}} \;
  = \; \innerprod{gn,label={[gphase]right:$-\pi/2$}}{rn,label={[rphase]right:$\pi$}} \; \innerprod{gn,label={[gphase]right:$-\pi/2$}}{rn,label={[rphase]right:$-\pi/2$}}.
 \end{equation}
 Equalities similar to \eqref{eq:scalar_pi_2_equality}, \eqref{eq:omega-dagger_squared}, and \eqref{eq:minus_omega} can be derived with all the signs flipped.

 This covers all the combinations of two diagrams from \eqref{eq:basic_overlaps}.
 More complicated diagrams can be dealt with step-by-step.
 Once the subdiagrams that involve complex phases are normalised, the real parts of the diagram can be brought into the normal form given in Lemma \ref{lem:modulus_nf}.

 The resulting normal form for complex non-zero scalars can easily be seen to be unique.
\end{proof}

\begin{proof}[Proof of Lemma \ref{lem:innerprod_wlog}]
 The first equality results from the topology meta rule.
 The equality between the second and third diagram follows from the colour change rule and the fact that the Hadamard node is self-inverse:
 \begin{equation}
  \begin{tikzpicture}
	\begin{pgfonlayer}{nodelayer}
		\node [label={[gphase]right:$\alpha$}, style=gn] (0) at (-3.5, 0.625) {};
		\node [label={[rphase]right:$\beta$}, style=rn] (1) at (-3.5, -0.625) {};
		\node [style=none] (2) at (-1.25, -0) {$=$};
		\node [style=gn, label={[gphase]right:$\beta$}] (3) at (0, -1.25) {};
		\node [style=none] (4) at (2, -0) {$=$};
		\node [style=rn, label={[rphase]right:$\alpha$}] (5) at (0, 1.25) {};
		\node [style=Hadamard] (6) at (0, 0.5) {};
		\node [style=Hadamard] (7) at (0, -0.5) {};
		\node [style=rn, label={[rphase]right:$\alpha$}] (8) at (3, 0.625) {};
		\node [style=gn, label={[gphase]right:$\beta$}] (9) at (3, -0.625) {};
	\end{pgfonlayer}
	\begin{pgfonlayer}{edgelayer}
		\draw (0) to (1);
		\draw (5) to (3);
		\draw (8) to (9);
	\end{pgfonlayer}
\end{tikzpicture}.
 \end{equation}
 The last equality is again by the topology meta rule.
\end{proof}

\begin{proof}[Proof of Lemma \ref{lem:pi_multiplication}]
 We have:
 \begin{equation}\label{eq:scalar_mult}
  \innerprod{gn, label={[gphase]right:$\alpha$}}{rn, label={[rphase]right:$\pi$}} \; \innerprod{gn, label={[gphase]right:$\beta$}}{rn, label={[rphase]right:$\pi$}} \; = \; \innerprod{gn}{rn} \; \begin{tikzpicture}
	\begin{pgfonlayer}{nodelayer}
		\node [style=gn] (0) at (0, -0) {};
		\node [style=gn, label={[gphase]left:$\alpha$}] (1) at (-0.5, 1) {};
		\node [style=gn, label={[gphase]right:$\beta$}] (2) at (0.5, 1) {};
		\node [style=rn, label={[rphase]right:$\pi$}] (3) at (0, -1) {};
	\end{pgfonlayer}
	\begin{pgfonlayer}{edgelayer}
		\draw [bend right, looseness=1.00] (1) to (0);
		\draw [bend left, looseness=1.00] (2) to (0);
		\draw (0) to (3);
	\end{pgfonlayer}
\end{tikzpicture} \; = \; \innerprod{gn}{rn} \; \innerprod{gn, label={[gphase]right:$\alpha+\beta$}}{rn, label={[rphase]right:$\pi$}}
 \end{equation}
 using the copy rule, $\pi$-copy rule, and spider rule.
\end{proof}

\begin{proof}[Proof of Lemma \ref{lem:overlap_with_ket_zero}]
 The case $\alpha=0$ is trivial. For $\alpha=\pi$, note that by the spider and $\pi$-copy rules:
 \begin{equation}\label{eq:pi_remove}
  \begin{tikzpicture}
	\begin{pgfonlayer}{nodelayer}
		\node [style=gn, label={[gphase]right:$\pi$}] (0) at (-3.25, 0.5) {};
		\node [style=rn] (1) at (-3.25, -0.5) {};
		\node [style=none] (2) at (-1.25, -0) {$=$};
		\node [style=rn] (3) at (0, -1) {};
		\node [style=none] (4) at (2, -0) {$=$};
		\node [style=gn] (5) at (0, 1) {};
		\node [style=rn] (6) at (3, -0.5) {};
		\node [style=gn] (7) at (3, 0.5) {};
		\node [style=gn, label={[gphase]right:$\pi$}] (8) at (0, -0) {};
	\end{pgfonlayer}
	\begin{pgfonlayer}{edgelayer}
		\draw (0) to (1);
		\draw (5) to (3);
		\draw (6) to (7);
	\end{pgfonlayer}
\end{tikzpicture}.
 \end{equation}
 Now, using \eqref{eq:pi_remove}, \eqref{eq:halfscalar_innerprodgr2}, and Lemma \ref{lem:pi_multiplication} with $\beta=-\alpha$, together with various rewrite rules, yields:
 \begin{align}
  \innerprod{gn, label={[gphase]right:$\alpha$}}{rn} \;
  &= \; \halfscalar \;\; \innerprod{gn}{rn} \;\; \innerprod{gn}{rn} \;\; \innerprod{gn, label={[gphase]right:$\alpha$}}{rn} \;
  = \; \halfscalar \;\; \innerprod{gn}{rn} \;\; \innerprod{gn}{rn,label={[rphase]right:$\pi$}} \; \innerprod{gn, label={[gphase]right:$\alpha$}}{rn} \;
  = \; \halfscalar \; \innerprod{gn, label={[gphase]right:$-\alpha$}}{rn, label={[rphase]right:$\pi$}} \; \innerprod{gn, label={[gphase]right:$\alpha$}}{rn, label={[rphase]right:$\pi$}} \; \innerprod{gn, label={[gphase]right:$\alpha$}}{rn} \;
  = \; \halfscalar \; \innerprod{gn, label={[gphase]right:$-\alpha$}}{rn, label={[rphase]right:$\pi$}} \;\; \innerprod{gn}{rn} \;\; \begin{tikzpicture}
	\begin{pgfonlayer}{nodelayer}
		\node [style=gn] (0) at (0, -0.5) {};
		\node [label={[rphase]left:$\pi$}, style=rn] (1) at (-0.5, 0.5) {};
		\node [style=rn] (2) at (0, -1.5) {};
		\node [label={[gphase]right:$\alpha$}, style=gn] (3) at (0.5, 0.5) {};
		\node [label={[gphase]left:$\alpha$}, style=gn] (4) at (-0.5, 1.5) {};
	\end{pgfonlayer}
	\begin{pgfonlayer}{edgelayer}
		\draw [bend left, looseness=1.00] (3) to (0);
		\draw (0) to (2);
		\draw (4) to (1);
		\draw [bend right, looseness=1.00] (1) to (0);
	\end{pgfonlayer}
\end{tikzpicture} \nonumber \\
  &= \; \halfscalar \; \innerprod{gn, label={[gphase]right:$-\alpha$}}{rn, label={[rphase]right:$\pi$}} \;\; \innerprod{gn}{rn} \;\; \begin{tikzpicture}
	\begin{pgfonlayer}{nodelayer}
		\node [style=gn, label={[gphase]left:$\alpha$}] (0) at (-0.5, 0.5) {};
		\node [style=rn, label={[rphase]right:$\pi$}] (1) at (0, -1.5) {};
		\node [style=gn] (2) at (0, -0.5) {};
		\node [style=gn, label={[gphase]right:$\alpha$}] (3) at (0.5, 1.5) {};
		\node [style=rn, label={[rphase]right:$\pi$}] (4) at (0.5, 0.5) {};
	\end{pgfonlayer}
	\begin{pgfonlayer}{edgelayer}
		\draw (2) to (1);
		\draw [bend right, looseness=1.00] (0) to (2);
		\draw [bend left, looseness=1.00] (4) to (2);
		\draw (3) to (4);
	\end{pgfonlayer}
\end{tikzpicture} \;
  = \; \halfscalar \; \innerprod{gn, label={[gphase]right:$-\alpha$}}{rn, label={[rphase]right:$\pi$}} \;\; \innerprod{gn}{rn} \;\; \begin{tikzpicture}
	\begin{pgfonlayer}{nodelayer}
		\node [style=rn, label={[rphase]right:$\pi$}] (0) at (0, -1.5) {};
		\node [label={[gphase]right:$\alpha$}, style=gn] (1) at (0, 1.5) {};
		\node [style=gn, label={[gphase]right:$\alpha$}] (2) at (0, -0.5) {};
		\node [style=rn, label={[rphase]right:$\pi$}] (3) at (0, 0.5) {};
	\end{pgfonlayer}
	\begin{pgfonlayer}{edgelayer}
		\draw (1) to (0);
	\end{pgfonlayer}
\end{tikzpicture} \;
  = \; \halfscalar \; \innerprod{gn, label={[gphase]right:$-\alpha$}}{rn, label={[rphase]right:$\pi$}} \; \innerprod{gn, label={[gphase]right:$\alpha$}}{rn, label={[rphase]right:$\pi$}} \;\; \innerprod{gn}{rn} \;\;
  = \;\; \innerprod{gn}{rn},
 \end{align}
 thus proving the result for any $\alpha$.
\end{proof}

\begin{proof}[Proof of Lemma \ref{lem:y-states}]
 The desired equality can be derived as follows:
 \begin{align}
  \state{rn, label={[rphase]right:$-\pi/2$}} \;
  &= \; \begin{tikzpicture}
	\begin{pgfonlayer}{nodelayer}
		\node [style=Hadamard] (0) at (0, -0) {};
		\node [style=none] (1) at (0, 0.5) {};
		\node [label={[gphase]right:$-\pi/2$}, style=gn] (2) at (0, -0.75) {};
	\end{pgfonlayer}
	\begin{pgfonlayer}{edgelayer}
		\draw (1.center) to (2);
	\end{pgfonlayer}
\end{tikzpicture} \;
  = \; \halfscalar \; \innerprod{gn, label={[gphase]right:$-\pi/2$}}{rn, label={[rphase]right:$-\pi/2$}} \; \begin{tikzpicture}
	\begin{pgfonlayer}{nodelayer}
		\node [label={[rphase]right:$\pi/2$}, style=rn] (0) at (0, -0.25) {};
		\node [label={[gphase]right:$\pi/2$}, style=gn] (1) at (0, 1) {};
		\node [style=gn] (2) at (0, -1.5) {};
		\node [style=none] (3) at (0, 1.5) {};
	\end{pgfonlayer}
	\begin{pgfonlayer}{edgelayer}
		\draw (3.center) to (2);
	\end{pgfonlayer}
\end{tikzpicture} \;
  = \; \halfscalar \; \innerprod{gn, label={[gphase]right:$-\pi/2$}}{rn, label={[rphase]right:$-\pi/2$}} \; \begin{tikzpicture}
	\begin{pgfonlayer}{nodelayer}
		\node [style=gn] (0) at (0, -1.25) {};
		\node [style=none] (1) at (-0.5, 1.25) {};
		\node [style=rn, label={[rphase]right:$\pi/2$}] (2) at (0.5, 0.75) {};
		\node [style=rn] (3) at (0, -0.25) {};
		\node [style=gn, label={[gphase]left:$\pi/2$}] (4) at (-0.5, 0.75) {};
	\end{pgfonlayer}
	\begin{pgfonlayer}{edgelayer}
		\draw (1.center) to (4);
		\draw [bend right, looseness=1.00] (4) to (3);
		\draw [bend left, looseness=1.00] (2) to (3);
		\draw (3) to (0);
	\end{pgfonlayer}
\end{tikzpicture} \nonumber \\
  &= \; \halfscalar \; \halfscalar \;\; \innerprod{gn}{rn} \;\; \innerprod{gn}{rn} \;\; \innerprod{gn, label={[gphase]right:$-\pi/2$}}{rn, label={[rphase]right:$-\pi/2$}} \; \begin{tikzpicture}
	\begin{pgfonlayer}{nodelayer}
		\node [style=gn] (0) at (0, -1.25) {};
		\node [style=none] (1) at (-0.5, 1.25) {};
		\node [style=rn, label={[rphase]right:$\pi/2$}] (2) at (0.5, 0.75) {};
		\node [style=rn] (3) at (0, -0.25) {};
		\node [style=gn, label={[gphase]left:$\pi/2$}] (4) at (-0.5, 0.75) {};
	\end{pgfonlayer}
	\begin{pgfonlayer}{edgelayer}
		\draw (1.center) to (4);
		\draw [bend right, looseness=1.00] (4) to (3);
		\draw [bend left, looseness=1.00] (2) to (3);
		\draw (3) to (0);
	\end{pgfonlayer}
\end{tikzpicture} \;
  = \; \halfscalar \; \halfscalar \;\; \innerprod{gn}{rn} \;\; \innerprod{gn, label={[gphase]right:$-\pi/2$}}{rn, label={[rphase]right:$-\pi/2$}} \; \innerprod{gn}{rn, label={[rphase]right:$\pi/2$}} \; \state{gn, label={[gphase]right:$\pi/2$}} \nonumber \\
  &= \; \halfscalar \; \innerprod{gn, label={[gphase]right:$-\pi/2$}}{rn, label={[rphase]right:$-\pi/2$}} \; \state{gn, label={[gphase]right:$\pi/2$}},
 \end{align}
 using the colour change rule, the Euler decomposition rule, the copy rule, \eqref{eq:halfscalar_innerprodgr2}, and Lemma \ref{lem:overlap_with_ket_zero}.
\end{proof}

\begin{proof}[Proof of Lemma \ref{lem:omega_inverses}]
 By the identity rule, spider rule, Euler decomposition rule, colour change rule, and Lemma \ref{lem:overlap_with_ket_zero}:
 \begin{align}\label{eq:scalar_pi_2_inverse}
  \innerprod{gn,label={[gphase]right:$-\pi/2$}}{rn,label={[rphase]right:$-\pi/2$}} \; \innerprod{gn,label={[gphase]right:$\pi/2$}}{rn,label={[rphase]right:$\pi/2$}}
  &= \; \begin{tikzpicture}
	\begin{pgfonlayer}{nodelayer}
		\node [label={[gphase]right:$\pi/2$}, style=gn] (0) at (-2.75, 1.75) {};
		\node [label={[gphase]right:$-\pi/2$}, style=gn] (1) at (-6, 0.5) {};
		\node [label={[rphase]right:$\pi/2$}, style=rn] (2) at (-2.75, 0.5) {};
		\node [label={[rphase]right:$-\pi/2$}, style=rn] (3) at (-6, -0.75) {};
		\node [style=gn] (4) at (-2.75, 3) {};
		\node [label={[gphase]right:$\pi/2$}, style=gn] (5) at (-2.75, -0.75) {};
		\node [label={[gphase]right:$-\pi/2$}, style=gn] (6) at (-2.75, -2) {};
		\node [style=rn] (7) at (-2.75, -3.25) {};
		\node [style=none] (8) at (0, -0) {$=$};
		\node [style=rn] (9) at (2.75, -1.5) {};
		\node [style=gn] (10) at (2.75, 1.5) {};
		\node [style=gn] (11) at (1, 0.5) {};
		\node [style=rn] (12) at (1, -0.5) {};
		\node [style=gn, label={[gphase]right:$-\pi/2$}] (13) at (2.75, -0.5) {};
		\node [style=gn] (14) at (1.75, 0.5) {};
		\node [style=rn] (15) at (1.75, -0.5) {};
		\node [style=Hadamard] (16) at (2.75, 0.5) {};
		\node [style=rn] (17) at (8.5, 1) {};
		\node [label={[gphase]right:$-\pi/2$}, style=gn] (18) at (8.5, -0) {};
		\node [style=gn] (19) at (7.75, 0.5) {};
		\node [style=rn] (20) at (7, -0.5) {};
		\node [style=rn] (21) at (7.75, -0.5) {};
		\node [style=gn] (22) at (7, 0.5) {};
		\node [style=none] (23) at (6, -0) {$=$};
		\node [style=rn] (24) at (8.5, -1) {};
		\node [style=rn] (25) at (13.5, -0.5) {};
		\node [style=rn] (26) at (12.75, -0.5) {};
		\node [style=none] (27) at (11.75, -0) {$=$};
		\node [style=gn] (28) at (12.75, 0.5) {};
		\node [style=gn] (29) at (13.5, 0.5) {};
		\node [style=rn] (30) at (14.25, -0.5) {};
		\node [style=gn] (31) at (14.25, 0.5) {};
		\node [style=gn] (32) at (15.5, -0.25) {};
		\node [style=rn] (33) at (15.5, -1.25) {};
		\node [label={[gphase]right:$-\pi/2$}, style=gn] (34) at (15.5, 1.25) {};
	\end{pgfonlayer}
	\begin{pgfonlayer}{edgelayer}
		\draw (1) to (3);
		\draw (4) to (7);
		\draw (11) to (12);
		\draw (10) to (9);
		\draw (14) to (15);
		\draw (22) to (20);
		\draw (17) to (24);
		\draw (19) to (21);
		\draw (28) to (26);
		\draw (29) to (25);
		\draw (31) to (30);
		\draw [bend right=45, looseness=1.00] (34) to (32);
		\draw (32) to (33);
		\draw [bend left=45, looseness=1.00] (34) to (32);
	\end{pgfonlayer}
\end{tikzpicture} \nonumber \\
  &= \; \innerprod{gn}{rn} \; \innerprod{gn}{rn} \; \innerprod{gn}{rn} \; \innerprod{gn,label={[gphase]right:$-\pi/2$}}{rn} \;
  = \; \innerprod{gn}{rn} \; \innerprod{gn}{rn} \; \innerprod{gn}{rn} \; \innerprod{gn}{rn}.
 \end{align}
 The desired equality follows by multiplying both sides with \halfscalar{} \halfscalar{} and using \eqref{eq:halfscalar_innerprodgr2}.
\end{proof}

\end{document}